\begin{document}
\title{High-utility itemset mining for subadditive monotone utility functions%\thanks{Supported by organization x.}
	}
%
%\titlerunning{Abbreviated paper title}
% If the paper title is too long for the running head, you can set
% an abbreviated paper title here
%

\author{Siddharth Dawar, Debajyoti Bera, Vikram Goyal}
\institute{Indraprastha Institute of Information Technology, Delhi, India
%Springer Heidelberg, Tiergartenstr. 17, 69121 Heidelberg, Germany
\email{\{siddharthd,dbera,vikram\}@iiitd.ac.in}}
%}
\begin{comment}
\institute{
	Siddharth Dawar \at
	Department of Computer Science \\
	Indraprastha Institute of Information Technology, Delhi, India \\
	\email{siddharthd@iiitd.ac.in}
	\and
	Debajyoti Bera \at
	Department of Computer Science \\
	Indraprastha Institute of Information Technology, Delhi, India \\
	\email{dbera@iiitd.ac.in}
	\and
	Vikram Goyal \at
	Department of Computer Science \\
	Indraprastha Institute of Information Technology, Delhi, India \\
	\email{vikram@iiitd.ac.in}

}
\end{comment}

%\author{First Author\inst{1}\orcidID{0000-1111-2222-3333} %\and
%Second Author\inst{2,3}\orcidID{1111-2222-3333-4444} \and
%Third Author\inst{3}\orcidID{2222--3333-4444-5555}}
%%
%\authorrunning{F. Author et al.}

% First names are abbreviated in the running head.
% If there are more than two authors, 'et al.' is used.
%

%\institute{Princeton University, Princeton NJ 08544, USA \and
%Springer Heidelberg, Tiergartenstr. 17, 69121 Heidelberg, Germany
%\email{lncs@springer.com}\\
%\url{http://www.springer.com/gp/computer-science/lncs} \and
%ABC Institute, Rupert-Karls-University Heidelberg, Heidelberg, Germany\\
%\email{\{abc,lncs\}@uni-heidelberg.de}}
%
\maketitle              % typeset the header of the contribution
\begin{abstract} 
High-utility Itemset Mining (HUIM) finds itemsets from a transaction database with utility no less than a user-defined threshold where the utility of an itemset is defined as the sum of the item-wise utilities. In this paper, we generalize this notion to utility functions that need not be a simple sum of individual utilities. In particular, we study generalized utility functions that are subadditive and monotone (SM). We also describe a novel function that allows us to include external information in the form of a relationship graph for computing utility. Next, we focus on algorithms for HUIM problems with SM utility functions. We note that the existing HUIM algorithms use upper-bounds like ``Transaction Weighted Utility'' and ``Exact-Utility, Remaining-Utility'' for efficient search-space exploration. We derive analogous and tighter upper-bounds for SM utility functions. We design a novel inverted-list data structure called SMI-list and a new algorithm called SM-Miner to mine HUIs for SM functions. We explain how existing tree-based and projection-based HUIM algorithms can be adapted using these bounds. We experimentally compare adaptations of some of the latest HUIM algorithms and point out some caveats that should be kept in mind while handling utility functions that allow integration of domain knowledge with a transaction database.   

\keywords{High-utility itemset mining, Subadditive monotone function, Graph-based utility function, Data mining
}
\end{abstract}
\section{Introduction}
\label{intro}
High-utility itemset mining (HUIM) finds itemsets from a transaction database with utility no less than a user-defined threshold \cite{two_phase}. Every item in a transaction is associated with a quantity and global importance, say profit. The utility of an item is defined as the product of its quantity and profit. The utility of {\em an itemset in a transaction} is the {\em sum} of the utilities of individual items. HUIM can be used to identify sets of items that generated a large profit over many transactions.

HUIM has also been applied to extract co-expression patterns from gene-expression data \cite{biomedical_appln}. With the penetration of data mining techniques in hitherto untrodden areas, we envisage scenarios where ``utility'' of an itemset (in a transaction) is not a simple addition of the utilities of the individual items. We investigate the problem of high-utility itemset mining for utility functions that are subadditive and monotone (HUIM-SM). Our results can be seen as a generalization of HUIM since summation (of utilities) is such a function. While monotonicity appears to be a natural requirement, subadditivity stipulates the utility of a combination of items cannot be more than the combination of individual utilities. Subadditivity can be observed at several places in the retail domain, e.g., customers frequently purchase a set of products at less price compared to the sum of the individual prices. Retail stores often offer discounts \cite{bansal2015efficient} to customers on purchasing a hamper of products. 

Generalization of the utility function can lead to interesting applications of itemset mining techniques firstly by allowing novel mapping of quantities to utility and furthermore, by allowing integration of additional information (say, on items) for computing utility. For an example of the latter, consider a social network where the interests of every user are known. Suppose one transaction is created for every distinct interest, e.g., comprising of members who like ``chess'', and suppose the utility of a set of members is some clustering score (e.g., network modularity \cite{modularity} of that group. Then, HUIM-SM will reveal groups of users that have a high overall modularity score across all interests. Subadditive monotone set functions have been observed in many application domains to identify important groups of entities. Chen et al. \cite{chen} extended the concept of closeness centrality \cite{cc} to find a set of k nodes that have the largest closeness centrality as a whole. Their paper proved that closeness centrality for a set of nodes is subadditive \footnote{A non-negative submodular function is also subadditive but the reverse need not be true.} and monotone. Yan et al. \cite{yan} investigated the problem of group-level influence maximization with budget constraints in a social network. They proposed an subadditive monotone function to capture the influence spread scope for a group of users, and an algorithm that returns a group close to optimal set of users that maximize the influence spread with a limited cost budget. Tschiatschek et al. \cite{Tschiatschek} introduced a novel set of subadditive monotone utility functions over a sequence of items that captures the ordered preferences among items over arbitrarily long ranges. They proposed a greedy algorithm for selecting sequence of items under sequence length constraints, and showed an application of their algorithm on a movie recommendation dataset for the task of recommending a sequence of movies to a user. Perozzi et al. \cite{Perozzi} proposed a subadditive monotone objective function to select a small number of communities from a large number of extracted communities from an input graph that represents the input graph. Parambath et al. \cite{parambath} proposed a unified framework and an algorithm for the problem of group recommendation where a fixed number of items can be recommended to a group of users. Their paper framed the problem as choosing a subgraph with the largest group consensus score in a completely connected graph defined over the item affinity matrix. They proposed a subadditive monotone group consensus objective function for their problem setting.

With general utility functions in mind, we explored how they can be incorporated within existing HUIM algorithms. We ask the question if, when, and how, can we design a high-utility itemset mining algorithm for any arbitrary subadditive monotone function? Will the existing bounds used for search-space exploration still work? Our specific contributions are summarized below.
\begin{enumerate}
	\item We define the problem of high-utility itemset mining for subadditive monotone utility functions \cite{yan,Tschiatschek,Perozzi}. We investigate whether the existing tree-based, projection-based and list-based algorithmic frameworks for HUIM are sufficient to design algorithms for HUIM with SM functions (HUIM-SM). These frameworks are fuelled by ``upper bounds'' like ``Transaction-weighted utility ($TWU$)" and "Exact-utility, Remaining-utility ($EU_-RU$)". We derive new upper bounds (TSMWU and CU) that are better than TWU and $EU_-RU$ and use them to show how to adapt existing tree-based and projection-based algorithms to HUIM-SM.
	\item For the list-based paradigm, we observe that the existing list-based data structures in HUIM are unsuitable for HUIM-SM. So, we design a novel inverted-list data structure called SMI-list with a lightweight construction method and an algorithm called SM-Miner to mine high-utility itemsets for subadditive monotone utility functions.
	\item While traditional HUIM defines utility solely on the transaction database, a generalized utility can depend upon external information about the items and their quantities. We design a function $ucov$ that is also subadditive and monotone and captures the relationship among the items in the form a relationship graph.
	\item We conduct experiments with several HUIM-SM algorithms on real dense and sparse datasets. Our results demonstrate that the computation of utility can be a big factor in the relative performance of algorithms for complex functions like $ucov$. We highlight that subadditive monotone functions like $ucov$ can be designed to find high-influential groups of active users that can be attractive for applications like viral marketing. 
\end{enumerate}

\section{Related work}\label{sec:Related Work} 
Several algorithms have been proposed in the literature for HUIM. Broadly these algorithms vary in terms of the number of stages they run for (one phase and two-phase), their database representation (tree, utility list, and projected database) and other data structures and heuristics used to prune non high-utility itemsets effectively.

Liu et al. \cite{two_phase} developed a two-phase algorithm for mining high-utility itemsets from a transaction database. The two-phase algorithm overestimates the utility of all supersets of an itemset $X$ by computing transaction-weighted utility(TWU) \cite{twdc} as the utility measure in HUIM is not anti-monotonic. Itemset $X$ and its supersets can't be high-utility itemsets if transaction-weighted utility of $X$ is less than the minimum utility threshold. The two-phase algorithm generates a set of candidate high-utility itemsets in a level-wise manner like Apriori \cite{apriori}. A database scan is performed in the next phase to filter out the high-utility itemsets from the candidates; however, a lot of database scans are performed during the candidate generation phase. 

Ahmed et al. \cite{ihup_tree} proposed the first tree data structure called IHUP-tree, and a recursive tree-based algorithm for mining high-utility itemsets from a transaction database using only two database scans. Every node in the IHUP-tree stores the item name, a frequency count, and a TWU value. Every transaction in the transaction database is inserted to form a IHUP-tree. A path from the root to the leaf node represents a set of transactions. The IHUP mining algorithm generates a set of candidate high-utility itemsets by recursively creating local tree structures from the global IHUP tree for the itemsets with transaction-weighted utility \cite{twdc} no less than the minimum utility threshold. Another database scan is performed to find the exact high-utility itemsets by computing the exact utility of the generated candidates in the previous phase. Mining algorithm based on IHUP-tree generates a lot of candidates in the first phase, and the number of candidates influences the performance of tree-based algorithms. Tseng et al. \cite{upgrowth} proposed few strategies to reduce the estimated utility value of itemsets during the candidate generation phase, and a tree structure called UP-tree that stores a node utility value along with each node instead of TWU stored by IHUP-tree. An algorithm called UP-Growth was proposed similar to IHUP mining algorithm to generate candidates in the first phase and verify them later. Tseng et al. \cite{up_tree} proposed another algorithm called UP-Growth+ that produced fewer candidates compared to UP-Growth by storing better utility estimate with each node of the UP-tree. Dawar et al. \cite{uphist} proposed another tree structure called UP-Hist that stores a histogram of item-quantities with each node to further reduce the number of candidates generated in the first phase. 

Tree-based algorithms spend a lot of time verifying candidates to generate the complete set of high-utility itemsets. Liu et al. \cite{hui_miner} proposed the first algorithm called HUI-Miner for mining high-utility itemsets without generating any candidates. A list-based data structure called utility-list was proposed that stores a Transaction identifier (TID), Exact-utility (EU), and Remaining-utility (RU) for each itemset. Initially, a database scan is performed to remove items with TWU less than the minimum utility threshold. Another database scan is performed to construct the utility-list for every item remaining in the database. The utility-list for a $\lbrace k \rbrace$-itemset is constructed from the utility-lists of two $\lbrace k-1 \rbrace$-itemsets. HUI-Miner can compute the exact-utility of an itemset by constructing its utility-list, and avoids the bottleneck phase of candidate verification. The operation of joining utility-list of two itemsets to create the utility-list for longer itemset is a costly operation. Fournier-Viger et al. \cite{fhm} proposed an ``Efficient Utility Co-occurrence Structure'' (EUCS) strategy to reduce the number of join operations by keeping the TWU information for every pair of items and an algorithm called FHM that makes use of the EUCS strategy. Dawar et al. \cite{hybrid} proposed a hybrid algorithm called UFH based on integration of UP-Growth+ and FHM that generates high-utility itemsets in a single phase only like list-based algorithms and reduces the number of join operations performed during the construction of utility-lists. 

Zida et al. \cite{efim} proposed a fast and memory-efficient algorithm called EFIM that stores the transactions as a projected database recursively during the mining process. EFIM was shown to be about two to three orders of magnitude faster than UP-Growth+, FHM, and several other algorithms on dense datasets. Liu et al. \cite{d2hup} proposed a data structure called Chain of Accurate Utility Lists (CAUL) and an algorithm called D2HUP for efficiently mining high-utility itemsets from sparse datasets. Krishnamoorthy \cite{hminer} proposed another data structure based on utility-list and hyperlink structure used by D2HUP and an algorithm called HMINER to extract high-utility itemsets. 

However, all of the above techniques work for a specific utility function and do not generalize to subadditive monotone utility functions.

Guns et al. \cite{cp_guns_2011} proposed a declarative constraint programming approach to model and solve tasks like frequent itemset mining \cite{apriori}, discriminative \cite{discriminative}, closed itemset mining \cite{closed} etc. The itemset mining problem is expressed in a high-level language, and a solver is used to perform the mining task. Constraint programming approach separates the model from the solver. It was observed that the constraint programming approach performs worse than specialized algorithms for frequent itemset mining due to the use of alternative data structure and other overhead in the constraint programming system. Silva et al. \cite{cpm_survey} proposed a framework for constraint pattern mining that allowed to organize and analyze different algorithms based on the properties of constraints like anti-monotonicity, monotonicity, succinctness, prefix-monotonicity, and mixed monotonicity. Guns et al. \cite{miningzinc} introduced a declarative framework named MiningZinc for constraint-based data mining. To the best of our knowledge, MiningZinc is the first framework that can express the high-utility itemset mining problem. MiningZinc uses a cover function that returns the set of transaction identifiers containing an itemset and the actual data to compute its utility. MiningZinc expresses the itemset mining problem over integer and set variables with no explicit data structures. Coussat et al. \cite{high_utility_tensors} defined the problem of high-utility itemset mining in uncertain tensors, and showed that an algorithm called multidupehack \cite{multidupehack_2014} could be deployed to mine itemsets. The authors studied a generalized version of high-utility itemset mining problem where the utilities can be positive and negative and are not restricted to matrices. The utility of an itemset is defined as the sum of utility of individual items by MiningZinc and multidupehack. Computing the utility of an itemset as the sum of utility of items does not necessarily hold in our generalization of high-utility itemset mining for subadditive monotone function. 

The research community has contributed significantly to improve the efficiency of high-utility itemset mining algorithms in the last few years. The focus of this paper is to identify the changes required in the existing high-utility itemset mining algorithmic frameworks to mine itemsets for any subadditive monotone utility function.   

\section{Background and Problem Statement}\label{sec:Background}
Consider the usual setting of HUIM on a transaction database $D$ with $n$ transactions $\{T_1, T_2, \ldots, T_n\}$ over items $I=\lbrace i_1,i_2,...,i_m \rbrace$. Each transaction can be thought of as a subset of $I$ along with a positive quantity (or weight) associated with every item $i \in T$; we will use $q(i,T)$ to denote this weight and we will express $T$ as $\{ (i_{j_1} ~:~n_{j_1}), (i_{j_2}~:~n_{j_2}), \ldots (i_{j_k}~:~n_{j_k}) \}$ where $i_j$ denotes an item in $T$ and $n_j=q(i_j,T) \ge 0$ is its weight in $T$. By an ``itemset $X$ in $T$'' we will mean a set of some items appearing in $T$ and associated with the same weight as that in $T$. We will use  ``weighted itemset'' to refer to such sets with quantities.

For this work, we assume that there is some ``subadditive monotone utility'' function $f(\cdot)$ that returns a numerical value for any weighted itemset. For example,  $f\left(\{ (x_1:n_1), (x_2:n_2), \ldots, (x_k:n_k)\}\right)$ could be its cardinality $k$, the addition function over the quantities $\sum_j n_j$ or some complex function depending upon the items and their quantities.

Now, let $T$ be some transaction; for any subset $X = \{x_1, \ldots\}$ of items from $T$, we can define the utility of $X$ in $T$, denoted by $u(X,T)$, as the value of $f(\cdot)$ on the weighted itemset $\left\{ \big(x_1, q(x_1,T)\big), \ldots \right\}$.
The utility of $X$ in the database is defined as in HUIM: $u(X) = \sum_{\substack{X \subseteq T \\ T \in D}} u(X,T)$.\\

\noindent{\bf HUIM-SM problem:}
	An itemset $X$ is called a high utility itemset if $u(X)$ is no less than a given minimum user-defined threshold denoted by $\theta$. Given a transaction database $D$, a subadditive monotone utility function $f(\cdot)$ and a minimum user-defined threshold $\theta$, the aim is to find all high-utility itemsets.

Current HUIM algorithms work by recursively growing an itemset (called as the prefix) by appending an item from the set of items (I) to the prefix and then determining whether the newly formed prefix (called $\beta$) is high-utility or not. For this, a projected database for $\beta$ is constructed that contains all transactions with itemset $\beta$, and the algorithm recursively extends $\beta$. Since the search space of prefixes is exponential in the number of distinct items present in the database, the computational challenge is to efficiently determine whether there could be any high-utility itemset containing $\beta$. For this purpose, current HUIM algorithms use upper bound functions of $u(\beta)$ to decide for further exploration.

Even if the current HUIM algorithms are adapted for HUIM-SM, it is unclear if the current bounds will be correct for a different, general, utility function. Since a correct, and preferably tight, bound is crucial for efficient pruning of search space and also depends on the data structures used, careful attention must be given to come up with an appropriate bound function. We undertake this task for certain types of utility functions in the next few sections.

\section{Subadditive and monotone utility functions}\label{sec:our} 
Designing a generic yet efficient HUIM algorithm is tricky since it has to work for all utility functions. We leave that question for future, but in this work, we show how to design HUIM algorithms for any {\em subadditive and monotone} utility function (HUIM-SM) --- such functions include the $Sum$ utility function (defined below) that is used to define HUIM. For the definitions below, $I$ is the universe of items and $T$ is some transaction, essentially a weighted itemset.

\begin{definition}[subadditive and monotone function]
    Consider functions that map weighted itemsets over $I$ to positive real numbers. 
	Such a function $f(\cdot)$ is subadditive if $\forall ~X, Y \subseteq T$, $f(X\cup Y) \leq f(X)+f(Y)$. $f(\cdot)$ is monotone if $\forall X \subseteq Y \subseteq T$, $f(X) \leq f(Y)$. A utility function $u(\cdot,T)$ is subadditive and monotone if satisfies the above for {\em every transaction}.
\end{definition}
It should be noted that we require $u(X,T)$ to be subadditive and monotone {\em for any $T$} but no such property may hold for $u(X)$ i.e. the utility of an itemset $X$ in the database.

We now give a few examples of subadditive and monotone utility functions. First is $Sum(\{x_1:n_1, \ldots, x_k:n_k\}) = \sum_j n_j$, essentially returning the total quantity of items in an itemset. One should note that this is the utility function used in HUIM. Two other examples are $\log\left( \prod_j n_j \right)$ and $\sqrt{\sum_j n_j}$. It is known that the last function is subadditive and monotone \cite{penalty} and these properties are easy to verify for the first two functions as well.

\subsection{Graph-based utility function}
The examples of utility functions on weighted itemsets stated earlier depend on the quantities of included items but did not depend upon a particular grouping of items. Now we describe a utility function $ucov$ that not only depends upon the items but also allows us to incorporate any additional information on the relationship between the items into the utility function. Referring to the applications discussed in Section \ref{intro}, such knowledge may be helpful in mining retail logs, and finding active communities in a social network. Having seen the popularity of subadditive and monotone functions in several applications as described in Section \ref{intro}, we ensured that our utility function belongs to the same flock as well.

\begin{table}
	
	\begin{center}
		%	{| l | l | l |l|l}
		\caption{$Transaction \, database \, for \, coverage \, utility (ucov)$ \label{fig:Example_2}}
		\scalebox{0.75}
		{
		\begin{tabular}{  l  p{5.7cm} p{1cm} l p{2.2cm} l }
				\hline 
				\bfseries{TID} & \bfseries{Transaction} & \bfseries{$TU$} & \bfseries{$TSMU$} & \bfseries{$EU(\lbrace AC \rbrace,T)$ + $RU(\lbrace AC \rbrace,T)$} & \bfseries{$CU(\lbrace AC \rbrace,T)$} \\ \hline %& \bfseries{$TSMU_{R}$}  \\ \hline
				$T_1$ & $(A:5) \, (C:10) \, (D:2)$ & 58 & 37 & 58 & 37\\% & 4.12  \\ 
				$T_2$ & $(A:10) \, (C:6) \, (E:6) \, (G:5)$ & 97 & 62 & 97 & 62 \\ %& 5.19 \\ 
				$T_3$ & $(A:10) \, (B:4) \, (D:12) \, (E:6) \, (F:5)$ & 139 & 69 & 0 & 0 \\ % & 6.08  \\ 
				$T_4$ & $(A:5) \, (B:2) \, (C:3) \, (D:2) \, (H:2)$ & 47 & 26 & 41 & 26 \\ % & 3.74 \\
				$T_5$ & $(B:8) \, (C:13) \, (D:6) \, (E:3)$ & 99 & 58 & 0 & 0 \\ %& 5.47\\ 
				$T_6$ & $(B:4) \, (C:4) \, (E:3) \, (G:2)$ & 42 & 27 & 0 & 0 \\ %& 3.60 \\ 
				$T_7$ & $(F:1) \, (G:2)$ & 9 & 7 & 0 & 0 \\ % & 1.73\\
				$T_8$ & $(F:4) \, (G:3)$ & 21 & 15 & 0 & 0 \\ % & 2.64 \\
				\hline    
			\end{tabular}
			}
	\end{center}
\end{table}
%Figure start 
\begin{figure}
	\begin{center}
	
		\includegraphics[width=6cm]{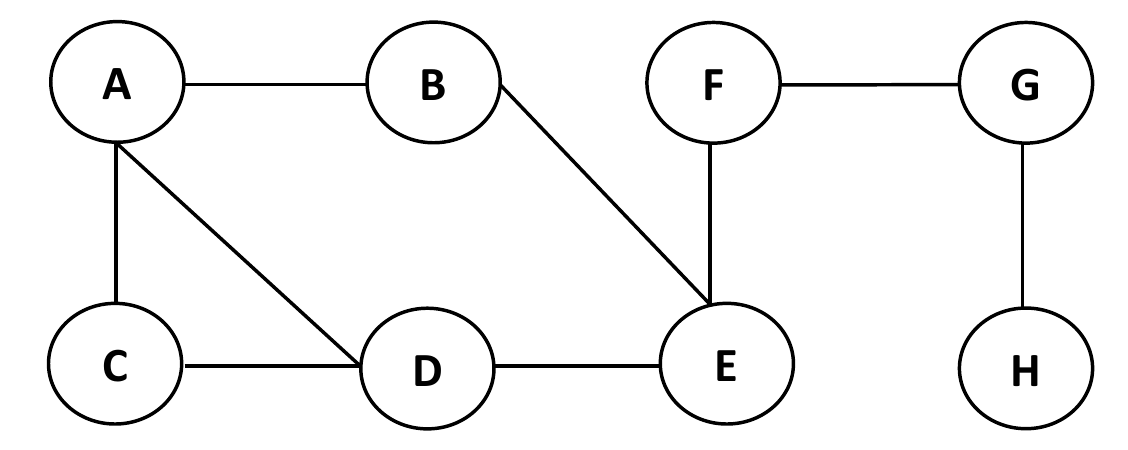}
	\end{center}
	\caption{Graph over items}
	\label{fig:graph_items}
\end{figure}
%Figure end	
Given an undirected unweighted graph $G$ with vertices $V$, graph coverage ($Co(X)$) of a subset of vertices $X \subseteq V$ is defined as the cardinality of the set containing $X$ \footnote{$Co(X)$ returns the cardinality of a set containing $X$ to ensure that $Co(X)$ of an itemset $X$ is greater than zero.} and the immediate neighbors of the vertices in $X$. It is known that $Co(\cdot)$ is a subadditive and monotone function \cite{Krause12submodularfunction}. For example, coverage of $\lbrace A,C \rbrace$ in the graph shown in Figure~\ref{fig:Example_2} is equal to $|\{A,B,C,D\}|$ = 4.

Coming back to high-utility itemset mining, suppose in addition to a transaction database we are also given an additional graph $G$ over the items (objects) that capture their pairwise relationships (e.g., see Figure~\ref{fig:Example_2}), say follower-followee relationship. Assume that a transaction defines a set of users along with their frequency of activities on a particular day. Now we define a utility function called $ucov$ that combines the quantity information in the database and relationship among the users in terms of the coverage information in $G$. The function $ucov$ captures the notion of minimal coverage/influence of a set of active users on their immediate neighbours and mined patterns can be used as a target for marketing purposes. Functions like $ucov$ can be designed to capture sets of influential users who can be given incentives to promote a product in their immediate neighbourhood also known as Ego-network \cite{Perozzi} in the community detection literature. It can be observed that influence maximization \cite{IM} is a different problem as it aims to find the set of users that influence the maximum number of users at the end by propagating influence in multiple iterations.  

We do a comparative analysis of the patterns generated by using $ucov$ on Twitter dataset against existing baseline functions like frequency, HUIM and some of our-defined functions in Section \ref{sec:twitter}. Please note that $ucov$ is our designed example of a subadditive monotone function. The study in this paper is valid for any subadditive monotone function. We want to highlight that domain experts and pattern mining practitioners can come up with any utility function that combines information from multiple data sources to find interesting groups of items.

\begin{definition}[Coverage utility of an itemset ($ucov$)] \label{defi2}
	Let $T = \{ (x_1:q_1),$ \\ $(x_2:q_2), \ldots (x_n:q_n) \}$ be a transaction such that $\forall i \in \lbrace 1 \ldots n \rbrace$, $q_i>0$. Suppose that the items are ordered \footnote{ We order the items in a transaction T and define $ucov$ as per Definition \ref{defi2} to make the subsequent proofs for $ucov$ easier to understand.  } such that $q_i \le q_{i+1}$ for all $i$. Let $X$ be an itemset with $k$ items from T: $X = \lbrace x_1,x_2, \ldots ,x_k \rbrace$. We define the {\em coverage utility} of $X$ in $T$ in the following manner.
	$$	ucov(X,T)= q_1 \times Co(\lbrace X \rbrace) + \displaystyle\sum_{j=2}^{k} (q_j-q_{j-1}) \times Co(\lbrace x_j\cdots x_k\rbrace)$$
\end{definition}
For example, $ucov({ACD}, T_1)$ in the database in Figure~\ref{fig:Example_2} can be computed as $2 \times Co(\lbrace DAC \rbrace) + 3 \times Co(\lbrace AC \rbrace) + 5 \times Co(\lbrace C \rbrace) = 2\times 5 + 3\times4 + 5\times3 = 37$.
\begin{lemma} \label{lemma_ucov_proof}
Let $T = \{ (A_1:q_1), \ldots (A_n:q_n), (B_1:r_1), \ldots (B_m:r_m) \}$ be a transaction with positive quantity associated with every item in $T$. Suppose that the items are ordered  such that $q_1 \leq \cdots \leq q_n \leq r_1 \leq \cdots \leq r_m$. Let $X$ be an itemset with $n$ items from T: $X = \lbrace A_1,A_2, \ldots ,A_n \rbrace$. Let $Y$ be an itemset with $m$ items from T: $Y = \lbrace B_1,B_2, \ldots ,B_m \rbrace$. Then, ucov(X,T) + ucov(Y,T) $\geq$ ucov(X$\cup$Y,T).

\end{lemma}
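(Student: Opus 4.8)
The plan is to exploit a ``layer-cake'' (superlevel-set) rewriting of $ucov$ that reduces the claimed inequality to the subadditivity of $Co$ applied one level at a time. For an itemset $Z \subseteq T$ and a threshold $t>0$, write $L_Z(t) = \{\, z \in Z : q(z,T) \ge t \,\}$ for the set of items of $Z$ whose quantity is at least $t$. My first step is to verify the representation
\[ ucov(Z,T) \;=\; \int_0^\infty Co\big(L_Z(t)\big)\, dt, \]
with the convention $Co(\emptyset)=0$. This holds because, as $t$ increases past each successive quantity value, $L_Z(t)$ loses exactly the corresponding (smallest remaining) item, so the integrand is a step function equal to $Co(\{x_j,\ldots,x_k\})$ on the interval $(q_{j-1},q_j]$; integrating this step function reproduces the telescoping sum of Definition~\ref{defi2} term by term.

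The second, and central, step is to observe that because $X$ and $Y$ are disjoint and the quantities satisfy $q_1 \le \cdots \le q_n \le r_1 \le \cdots \le r_m$, the superlevel set of the union splits cleanly as $L_{X\cup Y}(t) = L_X(t)\cup L_Y(t)$ for every $t$. I will check that this is exactly the suffix structure appearing when $ucov(X\cup Y,T)$ is expanded directly from the definition on the combined ordered list: on $(0,q_n]$ the set is a suffix of $X$ together with all of $Y$, on $(q_n,r_1]$ it is all of $Y$, and on $(r_1,\infty)$ it is a suffix of $Y$ — matching the four groups of terms one obtains by expanding the definition, including the cross term $(r_1-q_n)\,Co(Y)$ produced by the gap between the largest $X$-quantity and the smallest $Y$-quantity.

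With these two facts in hand the conclusion is immediate: since $Co$ is subadditive, $Co\big(L_X(t)\cup L_Y(t)\big) \le Co\big(L_X(t)\big) + Co\big(L_Y(t)\big)$ pointwise in $t$, and integrating this inequality over $t\in(0,\infty)$ gives $ucov(X\cup Y,T)\le ucov(X,T)+ucov(Y,T)$. I expect the main obstacle to be bookkeeping rather than anything deep: one must confirm that the integral representation faithfully recovers the ordered, telescoped definition of $ucov$ (especially the boundary thresholds and that cross term), and that the union identity for superlevel sets holds there. A reader who prefers to avoid the integral can read the same argument discretely: refine the common list of breakpoints $\{q_i\}\cup\{r_j\}$, apply subadditivity of $Co$ to $L_X(t)\cup L_Y(t)$ on each resulting layer, and sum the resulting weighted inequalities.
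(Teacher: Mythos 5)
Your proof is correct and is essentially the paper's own argument in different clothing: the layer-cake representation $ucov(Z,T)=\int_0^\infty Co(L_Z(t))\,dt$ reproduces exactly the row-by-row decomposition of Table~\ref{Tab:proof1} (one layer per breakpoint interval, including the expanded cross term $(r_1-q_n)\,Co(Y)$), and applying subadditivity of $Co$ pointwise on each layer is the same key step the paper performs term by term. The only substantive difference is cosmetic generality: since $L_{X\cup Y}(t)=L_X(t)\cup L_Y(t)$ holds with no ordering assumption, your version of the argument also yields Theorem~\ref{thm1} directly, whereas the paper routes the general case through this lemma.
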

\begin{proof}
Let us analyze the terms in ucov(X,T), ucov(Y,T), and ucov(X$\cup$Y,T). %in Table \ref{Tab:proof1}.
The first term in ucov(Y,T) is $(r_1)\times$Co($\lbrace T \rbrace$). The first term in ucov(Y,T) can be expanded such that ucov(X,T) + ucov(Y,T) $\geq$ ucov(X$\cup$Y,T) for the first n+1 terms as shown in Table \ref{Tab:proof1} as $(r_1)\times$Co($\lbrace T \rbrace$) = $(q_1)\times$Co($\lbrace T \rbrace$) + $(q_2-q_1)\times$Co($\lbrace T \rbrace$) + $\cdots$+ $(r_1-q_n)\times$Co($\lbrace T \rbrace$). The remaining terms in ucov(Y,T) and ucov(X$\cup$Y,T) are equal.  
\end{proof}

\begin{table}[!h]
	\begin{center}
		%	{| l | l | l | l | l | l | }
		\caption{\small Comparison of ucov(X,T) + ucov(Y,T) with ucov(X$\cup$Y,T) (Lemma \ref{lemma_ucov_proof}) \label{Tab:proof1}}
		\scalebox{0.68}{
		\begin{tabular}{ p{2.7cm} p{3.2cm}  p{0.5cm} p{4cm} p{0.5cm} p{4.5cm}}
			\hline
			\bfseries{Quantity} & \bfseries{ucov(X,T)} & \bfseries{+}&\bfseries{ucov(Y,T)} & \bfseries{$\geq$}& \bfseries{ucov(X$\cup$Y,T)} \\ \hline
			$m_1=q_1$ & $m_1\times Co(\lbrace A_1\cdots A_n \rbrace)$ & + & \boldmath{$m_1\times Co(\lbrace Y \rbrace)$} & $\geq$& $m_1\times Co(\lbrace A_1\cdots A_n \cup Y  \rbrace)$\\
			
			$m_2=q_2-q_1$ & $m_2\times Co(\lbrace A_2\cdots A_n \rbrace)$ & + & \boldmath{$m_2\times Co(\lbrace Y \rbrace)$} & $\geq$ & $m_2\times Co(\lbrace A_2\cdots A_n \cup Y \rbrace)$ \\
			.& & & & & \\
			$m_{n}=q_{n}-q_{n-1}$ & $m_{n}\times Co(\lbrace A_n \rbrace)$ & + & \boldmath{$m_n\times Co(\lbrace Y \rbrace)$} & $\geq$ & $m_{n}\times Co(\lbrace A_n \cup Y \rbrace)$\\
			
			$m_{n+1}=r_1-q_{n}$ &  &  & \boldmath{$m_{n+1}\times Co(\lbrace Y \rbrace)$} & = & $m_{n+1}\times Co(\lbrace  Y \rbrace)$\\
			
			$m_{n+2}=r_2-r_1$ &  &  & $m_{n+2}\times Co(\lbrace B_2\cdots B_m \rbrace)$ & = & $m_{n+2}\times Co(\lbrace  B_2\cdots B_m \rbrace)$\\
			.& & & & & \\
			$m_{m}=r_m-r_{m-1}$ &  &  & $m_{m}\times Co(\lbrace B_m \rbrace)$ & = & $m_{m}\times Co(\lbrace  B_m \rbrace)$\\
			\hline
		\end{tabular}
		}
	\end{center}
\end{table}

\begin{table}[!h]
	\begin{center}
		%	{| l | l | l | l | l | l | }
		\caption{\small Example (Lemma 1)
		\label{Tab:example_lemma1}}
		\scalebox{0.7}{
		\begin{tabular}{ p{2.7cm} p{3.2cm}  p{0.5cm} p{3cm} p{0.5cm} p{4.5cm}}
			\hline
			\bfseries{Quantity} & \bfseries{ucov(X,T)} & \bfseries{+}&\bfseries{ucov(Y,T)} & \bfseries{$\geq$}& \bfseries{ucov(X$\cup$Y,T)} \\ \hline
			$m_1=2$ & $2\times Co(\lbrace A_1,A_2,A_3 \rbrace)$ & + & \boldmath{$2\times Co(\lbrace B_1,B_2 \rbrace)$} & $\geq$& $2\times Co(\lbrace A_1,A_2,A_3,B_1,B_2  \rbrace)$\\
			
			$m_2=1$ & $1\times Co(\lbrace A_2,A_3 \rbrace)$ & + & \boldmath{$1\times Co(\lbrace B_1,B_2 \rbrace)$} & $\geq$ & $1\times Co(\lbrace A_2,A_3,B_1,B_2 \rbrace)$ \\
			
			$m_3=1$ & $1\times Co(\lbrace A_3 \rbrace)$ & + & \boldmath{$1\times Co(\lbrace B_1,B_2 \rbrace)$} & $\geq$ & $1\times Co(\lbrace A_3,B_1,B_2 \rbrace)$\\
			
			$m_4=1$ &  &  & \boldmath{$1\times Co(\lbrace B_1,B_2 \rbrace)$} & = & $1\times Co(\lbrace  B_1,B_2 \rbrace)$\\
			
			$m_5=3$ &  &  & $3\times Co(\lbrace B_2 \rbrace)$ & = & $3\times Co(\lbrace  B_2 \rbrace)$\\
			\hline
		\end{tabular}
		}
	\end{center}
\end{table}
For example, consider a transaction $T = \{ (A_1:2), (A_2:3), (A_3:4), (B_1:5), (B_2:8) \}$, $X = \lbrace A_1,A_2,A_3 \rbrace$, and $Y = \lbrace B_1,B_2\rbrace$ respectively. Refer Table \ref{Tab:example_lemma1} for an example for Lemma \ref{lemma_ucov_proof}. \\    

\begin{theorem}\label{thm1}
	The function ucov($\cdot$,T) is subadditive.
\end{theorem}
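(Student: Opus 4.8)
The plan is to establish a level-set (Choquet-integral style) representation of $ucov(\cdot,T)$ and then transfer subadditivity from $Co(\cdot)$ through the integral. For a transaction $T$, an itemset $X$, and a threshold $t\ge 0$, write $X_{>t}=\{x\in X: q(x,T)>t\}$ for the set of items of $X$ whose quantity in $T$ exceeds $t$.

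First I would show, for any itemset $X=\{x_1,\ldots,x_k\}$ ordered so that $q_1\le\cdots\le q_k$, that
$$ ucov(X,T)=\int_0^\infty Co(X_{>t})\,dt. $$
This holds because the integrand $t\mapsto Co(X_{>t})$ is a step function: it equals $Co(\{x_1,\ldots,x_k\})$ on $[0,q_1)$, equals $Co(\{x_j,\ldots,x_k\})$ on $[q_{j-1},q_j)$ for $j\ge 2$, and equals $Co(\emptyset)=0$ for $t\ge q_k$. Integrating this step function reproduces exactly the telescoping sum $q_1\,Co(X)+\sum_{j=2}^{k}(q_j-q_{j-1})\,Co(\{x_j\cdots x_k\})$ of Definition~\ref{defi2}. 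The crucial gain is that the right-hand side makes no reference to the chosen ordering: it depends only on the coverage of the superlevel sets, so the \emph{same} formula holds simultaneously for $X$, $Y$, and $X\cup Y$ with their respective internal orderings.

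Next I would observe that superlevel sets commute with union: for every $t$, $(X\cup Y)_{>t}=X_{>t}\cup Y_{>t}$, since an item lies in $(X\cup Y)_{>t}$ iff it belongs to $X$ or $Y$ and has quantity exceeding $t$. Applying the subadditivity of $Co(\cdot)$ (established earlier, \cite{Krause12submodularfunction}) pointwise at each threshold gives $Co((X\cup Y)_{>t})=Co(X_{>t}\cup Y_{>t})\le Co(X_{>t})+Co(Y_{>t})$; the inequality also holds when one of the two sets is empty because $Co(\emptyset)=0$. Integrating this inequality over $t\in[0,\infty)$ and using linearity of the integral yields
$$ ucov(X\cup Y,T)=\int_0^\infty Co\big((X\cup Y)_{>t}\big)\,dt\le \int_0^\infty Co(X_{>t})\,dt+\int_0^\infty Co(Y_{>t})\,dt=ucov(X,T)+ucov(Y,T), $$
which is the desired subadditivity.

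The main obstacle is confined to the first step: verifying the integral representation while correctly handling the per-itemset reordering and the boundary behaviour of $Co$ at the empty set. In particular, the footnoted convention that $Co$ returns a positive cardinality must be reconciled so that $Co(\emptyset)=0$ (or the integral truncated at the largest quantity), otherwise the representation fails for $t\ge q_k$. Once the representation is in place, subadditivity is immediate. This viewpoint also clarifies the role of Lemma~\ref{lemma_ucov_proof}: it is exactly the special case in which the quantity ranges of $X$ and $Y$ are separated ($q_n\le r_1$) and the itemsets are disjoint, where the layer-by-layer comparison can be read off directly; the integral argument removes both restrictions and covers interleaved quantities and overlapping items in a single stroke.
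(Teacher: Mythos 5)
Your proof is correct, and it takes a genuinely different route from the paper's. The paper proves subadditivity by a combinatorial term-by-term expansion: Lemma~\ref{lemma_ucov_proof} handles the special case where every quantity in $X$ is at most every quantity in $Y$ (so the first term of $ucov(Y,T)$ can be split to dominate the first $n$ layers of $ucov(X\cup Y,T)$), and the proof of Theorem~\ref{thm1} then argues, somewhat informally and mainly by example, that for interleaved quantities the terms of $ucov(X,T)$ and $ucov(Y,T)$ can always be refined into a common partition of quantity levels on which the comparison holds layer by layer. Your level-set identity $ucov(X,T)=\int_0^\infty Co(X_{>t})\,dt$ is exactly that common refinement made canonical: it eliminates the dependence on each itemset's internal ordering, reduces the whole argument to the single pointwise inequality $Co(X_{>t}\cup Y_{>t})\le Co(X_{>t})+Co(Y_{>t})$, and handles interleaved quantities and overlapping itemsets uniformly (the paper's expansion implicitly treats the items of $X$ and $Y$ as distinct). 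What the paper's approach buys is concreteness --- the tabulated layer decomposition is explicit and is reused in the monotonicity proof (Theorem~\ref{thm2}) and the $sumcov$ comparison (Theorem~\ref{thm_sumcov}) --- while yours buys rigor and generality: the same integral representation immediately yields monotonicity of $ucov(\cdot,T)$ from monotonicity of $Co$, and makes clear that the only properties of $Co$ used are that it is a nonnegative subadditive (respectively monotone) set function vanishing on the empty set. Your caveat about the footnoted convention for $Co$ on the empty set is the right one to flag; truncating the integral at the largest quantity resolves it.
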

\begin{proof}
Let $T$ be a transaction with positive quantity associated with every item in $T$. Let the items in $T$ be sorted in ascending order of quantity. Let $X \subseteq T$ and $Y \subseteq T$. ucov(X$\cup$Y,T) will find an item with minimum quantity either from set $X$ or $Y$ in transaction $T$. ucov(X$\cup$Y,T) will find items with minimum quantity from one set X( or Y) followed by an item from the other set Y(or X) respectively. Using Lemma 1, the terms in ucov(X,T) and ucov(Y,T) can be expanded such that ucov(X,T) + ucov(Y,T) $\geq$ ucov(X$\cup$Y,T). Hence, ucov($\cdot$,T) is a subadditive function.
\end{proof}

For example, consider a transaction $T = \{ (A_1:2), (A_2:3), (B_1:4), (A_3:7), (B_2:8), (A_4:9), (B_3:10)  \}$, $X = \lbrace A_1,A_2,A_3, A_4 \rbrace$, and $Y = \lbrace B_1,B_2,B_3\rbrace$ respectively. Refer Table \ref{Tab:example_theorem1} for an example for Theorem \ref{thm1}.   
\begin{table}
	\begin{center}
		%	{| l | l | l | l | l | l | }
		\caption{\small Example (Theorem \ref{thm1})
		\label{Tab:example_theorem1}}
		\scalebox{0.65}{
		\begin{tabular}{ p{1.5cm} p{4cm}  p{0.5cm} p{3.5cm} p{0.5cm} p{6cm}}
			\hline
			\bfseries{Quantity} & \bfseries{ucov(X,T)} & \bfseries{+}&\bfseries{ucov(Y,T)} & \bfseries{$\geq$}& \bfseries{ucov(X$\cup$Y,T)} \\ \hline
			$m_1=2$ & $2\times Co(\lbrace A_1,A_2,A_3,A_4 \rbrace)$ & + & \boldmath{$2\times Co(\lbrace B_1,B_2,B_3 \rbrace)$} & $\geq$& $2\times Co(\lbrace A_1,A_2,A_3,A_4,B_1,B_2,B_3  \rbrace)$\\
			
			$m_2=1$ & $1\times Co(\lbrace A_2,A_3,A_4 \rbrace)$ & + & \boldmath{$1\times Co(\lbrace B_1,B_2,B_3 \rbrace)$} & $\geq$ & $1\times Co(\lbrace A_2,A_3,A_4,B_1,B_2,B_3 \rbrace)$ \\
			
			$m_3=1$ & \boldmath{$1\times Co(\lbrace A_3,A_4 \rbrace)$} & + & \boldmath{$1\times Co(\lbrace B_1,B_2,B_3 \rbrace)$} & $\geq$ & $1\times Co(\lbrace A_3,A_4,B_1,B_2,B_3 \rbrace)$\\
			
			$m_4=3$ & \boldmath{$3\times Co(\lbrace A_3,A_4 \rbrace)$} & + & \boldmath{$3\times Co(\lbrace B_2,B_3 \rbrace)$} & $\geq$ & $3\times Co(\lbrace A_3,A_4,B_2,B_3 \rbrace)$\\
			
			$m_5=1$ & \boldmath{$1\times Co(\lbrace A_4 \rbrace)$} & + & \boldmath{$1\times Co(\lbrace B_2,B_3 \rbrace)$} & $\geq$ & $1\times Co(\lbrace A_4,B_2,B_3 \rbrace)$\\
			
			$m_6=1$ & \boldmath{$1\times Co(\lbrace A_4 \rbrace)$} & + & \boldmath{$1\times Co(\lbrace B_3 \rbrace)$} & $\geq$ & $1\times Co(\lbrace A_4,B_3 \rbrace)$\\
			
			$m_7=1$ &  &  & $1\times Co(\lbrace B_3 \rbrace)$ & = & $1\times Co(\lbrace  B_3 \rbrace)$\\
			\hline
		\end{tabular}
		}
	\end{center}
\end{table}

\begin{theorem}\label{thm2}
	ucov($\cdot$,T) is a monotone function.
\end{theorem}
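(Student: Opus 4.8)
The plan is to establish monotonicity, i.e.\ $ucov(X,T) \le ucov(Y,T)$ whenever $X \subseteq Y \subseteq T$, by first rewriting the defining sum as a single ``layer-cake'' integral and then invoking the monotonicity of $Co(\cdot)$ level by level. The advantage of this route is that it dissolves the telescoping quantity differences $q_j - q_{j-1}$ into the measure $dt$, so no delicate bookkeeping of the sorted order is needed when we compare $X$ with a larger set.

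First I would show that Definition~\ref{defi2} can be written as
$$ ucov(X,T) = \int_0^\infty Co\big(\{\, x \in X : q(x,T) \ge t \,\}\big)\, dt. $$
To justify this, sort the items of $X$ so that $q_1 \le \cdots \le q_k$ and set $q_0 := 0$. The integrand is then a step function of $t$ that equals $Co(\{x_j,\ldots,x_k\})$ on the interval $(q_{j-1}, q_j]$ and vanishes for $t > q_k$. Integrating this step function returns exactly $q_1\, Co(\{x_1,\ldots,x_k\}) + \sum_{j=2}^k (q_j - q_{j-1})\, Co(\{x_j,\ldots,x_k\})$, which matches the definition.

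With this representation in hand, the result is almost immediate. Fix $X \subseteq Y \subseteq T$ and any threshold $t > 0$. Since the quantity $q(x,T)$ of an item is the same whether we read it inside $X$ or inside $Y$, we have the inclusion $\{x \in X : q(x,T) \ge t\} \subseteq \{y \in Y : q(y,T) \ge t\}$. As $Co(\cdot)$ is monotone \cite{Krause12submodularfunction}, this yields $Co(\{x \in X : q(x,T)\ge t\}) \le Co(\{y \in Y : q(y,T)\ge t\})$ for every $t$, and integrating this pointwise inequality over $t \in (0,\infty)$ preserves its direction, giving $ucov(X,T) \le ucov(Y,T)$.

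The one place demanding care — and the main obstacle — is the verification of the integral representation when several items share the same quantity, because then some intervals $(q_{j-1}, q_j]$ collapse and the suffix sets must be read consistently with the chosen tie-breaking order. I would resolve this by noting that on any nonempty level $(s, s']$ the integrand equals $Co(\{x \in X : q(x,T) > s\})$ regardless of how ties at the endpoints are broken, so the integral is well defined and the discrete sum equals it for every admissible ordering. As an alternative that parallels the table-based argument of Theorem~\ref{thm1}, I would instead prove the single-item case $ucov(X,T) \le ucov(X \cup \{z\}, T)$ by inserting $z$ into the sorted order and checking term by term that each suffix coefficient either acquires the extra element $z$ (so its $Co$ value weakly increases by monotonicity) or is left unchanged, and then chain such single additions to reach an arbitrary superset $Y$.
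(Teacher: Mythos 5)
Your proof is correct, and it takes a genuinely different route from the paper. The paper proves monotonicity by the table-based, term-by-term argument it also used for subadditivity: it considers adding a single item $B_1$ with quantity $r_1$ satisfying $q_{i-1}\le r_1\le q_i$, splits the term $(q_i-q_{i-1})\,Co(\{A_i\cdots A_n\})$ into $(r_1-q_{i-1})\,Co(\{A_i\cdots A_n\})+(q_i-r_1)\,Co(\{A_i\cdots A_n\})$ so that the rows of $ucov(X,T)$ align with those of $ucov(X\cup\{B_1\},T)$, and then compares row by row using monotonicity of $Co(\cdot)$ --- this is precisely the ``insert one item into the sorted order and chain'' fallback you sketch in your last sentences. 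Your primary argument instead rewrites $ucov(X,T)=\int_0^\infty Co\bigl(\{x\in X: q(x,T)\ge t\}\bigr)\,dt$ and observes that the level sets of $X$ are contained in those of $Y$ for every threshold $t$, so the pointwise inequality integrates to the claim. The layer-cake representation is verified correctly (the step function equals $Co(\{x_j,\dots,x_k\})$ on $(q_{j-1},q_j]$, ties collapse to zero-length intervals, and the integrand vanishes beyond $q_k$ since quantities are positive), and it buys you more than the paper's bookkeeping does: it eliminates the case analysis on where the new item's quantity falls in the sorted order, it handles an arbitrary superset $Y$ in one step rather than by chaining single insertions, and the same representation would also yield Theorem~\ref{thm1} immediately, since the level set of $X\cup Y$ at height $t$ is the union of the level sets of $X$ and $Y$ and $Co(\cdot)$ is subadditive. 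The only care point --- consistency of the suffix sets under ties --- is one you identify and resolve adequately.
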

\begin{proof}
Let $T = \{ (A_1:q_1), \ldots (A_n:q_n), (B_1:r_1) \}$ be a transaction with positive quantity associated with every item in $T$. Suppose that the items are ordered  such that $q_1 \leq \cdots \leq q_n$ and $q_{i-1} \leq r_1 \leq q_i$. Let $X$ be an itemset with $n$ items from T: $X = \lbrace A_1,A_2, \ldots ,A_n \rbrace$. Let $Y$ be an itemset from T: $Y = \lbrace B_1 \rbrace$. Let us analyze the terms ucov(X,T) and ucov(X$\cup$Y,T) as shown in Table \ref{Tab:proof2}.  It can be quickly verified that the terms from 1st till $i-1^{th}$ rows in Table \ref{Tab:proof2} from ucov(X,T) is less than or equal to ucov(X$\cup$Y,T) as $Co(\cdot)$ is monotone. Let us consider the $i^{th}$ term in ucov(X$\cup$Y,T). It can be observed that the next term from ucov(X,T) can be expanded such that ucov(X,T) $\leq$ ucov(X$\cup$Y,T) as $(q_i-q_{i-1})\times$Co($\lbrace A_i\cdots A_n \rbrace$) = $(q_i-r_1)\times$Co($\lbrace A_i\cdots A_n \rbrace$) +$(r_1-q_{i-1})\times$Co($\lbrace A_i\cdots A_n \rbrace$). Hence, ucov($\cdot$,T) is a monotone function.

\end{proof}
\begin{table}[!h]
	\begin{center}
		%	{| l | l | l |l|l}
		\caption{\small Comparison of ucov(X,T) with ucov(X$\cup$Y,T) (Theorem \ref{thm2})\label{Tab:proof2}}
		\scalebox{0.7}{
		\begin{tabular}{ p{3.7cm} p{4.5cm}  p{0.5cm} p{4.5cm}}
			\hline
			\bfseries{Quantity} & \bfseries{ucov(X,T)} & \bfseries{$\leq$}& \bfseries{ucov(X$\cup$Y,T)} \\ \hline
			$m_1=q_1$ & $m_1\times Co(\lbrace A_1\cdots A_n \rbrace)$ & $\leq$& $m_1\times Co(\lbrace A_1\cdots A_n \cup Y  \rbrace)$\\
			
			$m_2=q_2-q_1$ & $m_2\times Co(\lbrace A_2\cdots A_n \rbrace)$ & $\leq$ & $m_2\times Co(\lbrace A_2\cdots A_n \cup Y \rbrace)$ \\
			.& & & \\
			$m_{i-1}=q_{i-1}-q_{i-2}$ & $m_{i-1}\times Co(\lbrace A_{i-1}\cdots A_n \rbrace)$ & $\leq$ & $m_{i-1}\times Co(\lbrace A_{i-1}\cdots A_n \cup Y \rbrace)$\\
			
			$m_i=r_1-q_{i-1}$ & \boldmath{$m_i\times Co(\lbrace A_i\cdots A_n  \rbrace)$} & $\leq$ & $m_i\times Co(\lbrace A_i\cdots A_n \cup Y \rbrace)$\\
			
			$m_{i+1}=q_i-r_1$ & \boldmath{$m_{i+1}\times Co(\lbrace A_i\cdots A_n \rbrace)$} & = & $m_{i+1}\times Co(\lbrace A_i\cdots A_n \rbrace)$\\
			
			$m_{i+2}=q_{i+1}-q_i$ & $m_{i+2}\times Co(\lbrace A_{i+1}\cdots A_n \rbrace)$ & = & $m_{i+2}\times Co(\lbrace A_{i+1}\cdots A_n\rbrace)$\\ 
			.& & & \\
			$m_{n+1}=q_n-q_{n-1}$ & $m_{n+1}\times Co(\lbrace A_n \rbrace)$ & = & $m_{n+1}\times Co(\lbrace A_n\rbrace)$\\
			\hline
		\end{tabular}
		}
	\end{center}
\end{table}

\section{Algorithms for HUIM-SM}
\label{sec:hguimalgo}
A superset of a low-utility itemset can have high-utility as utility functions do not necessarily follow the anti-monotonicity property. Hence to avoid searching all itemsets, HUIM algorithms define bounds such as transaction utility(TU), transaction weighted utility(TWU) \cite{twdc}, exact-utility and remaining-utility \cite{hui_miner}. We now discuss these bounds in the context of a general subadditive and monotone utility function $u(X,T)$.

\subsection{TU and TWU bounds}\label{subsec:tu}
The $TU$ of a transaction is the {\em sum} of the utilities of its items. $TWU$ of an itemset $X$ is the sum of $TU$ for transactions that contains $X$. The {\em transaction-weighted downward closure} property of $TWU$ ensures that if $TWU(X)$ is less than the threshold, $X$ and its super-sets cannot be high-utility itemsets \cite{twdc}. We now define a related concept of transaction subadditive-monotone utility.

\begin{definition}[TSMU and TSMWU]
	The transaction subadditive monotone utility of a transaction $T$ is defined as $TSMU(T)=u(T,T)$. The transaction subadditive monotone weighted utility (TSMWU) of an itemset $X$ is the sum of $TSMU(T)$ for all the transactions containing $X$. 
\end{definition}
For example, consider transaction $T_1$ from Table \ref{fig:Example_2}. TU for $T_1$ for $ucov$ is equal to $ucov(\lbrace A \rbrace,T_1)+ucov(\lbrace C \rbrace,T_1)+ucov(\lbrace D \rbrace,T_1)$ which evaluates to 20 + 30 + 8 = 58. TSMU for $T_1$ for $ucov$ is equal to $ucov(\lbrace ACD \rbrace, T_1)$ which evaluates to 37.

Now we show that $TSMWU$ satisfies the downward-closure property making it useful in mining algorithms and a tighter bound compared to TWU.
\begin{lemma} \label{tsmwu_prune_lemma}
If TSMWU(X) is less than the threshold, X and its super-sets cannot be high-utility itemsets.
\end{lemma}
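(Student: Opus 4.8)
The plan is to establish the stronger claim that $u(Z) \le TSMWU(X)$ for every superset $Z \supseteq X$ (taking $Z = X$ recovers the statement for $X$ itself); once this bound is in hand, the hypothesis $TSMWU(X) < \theta$ immediately gives $u(Z) < \theta$, so neither $X$ nor any of its supersets can be high-utility.

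To prove the bound I would split the argument into two independent observations. First, a per-transaction estimate: for any transaction $T$ with $Z \subseteq T$, monotonicity of $u(\cdot,T)$ yields $u(Z,T) \le u(T,T) = TSMU(T)$, since $Z$ is a subset of the full weighted itemset $T$. Summing over all transactions containing $Z$ gives $u(Z) = \sum_{Z \subseteq T} u(Z,T) \le \sum_{Z \subseteq T} TSMU(T)$. Second, a transaction-set comparison: because $X \subseteq Z$, every transaction that contains $Z$ also contains $X$, so the index set $\{T : Z \subseteq T\}$ is contained in $\{T : X \subseteq T\}$. As each $TSMU(T) = u(T,T)$ is non-negative (utility functions map to positive reals), enlarging the index set only increases the total, so $\sum_{Z \subseteq T} TSMU(T) \le \sum_{X \subseteq T} TSMU(T) = TSMWU(X)$. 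Chaining the two inequalities delivers $u(Z) \le TSMWU(X)$, which is exactly what is needed.

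I expect the main conceptual hurdle to be keeping the two opposing directions straight: passing from $X$ to a superset $Z$ shrinks the utility contributed per transaction (via monotonicity) but also shrinks the collection of qualifying transactions, and it is precisely the non-negativity of $TSMU$ that lets me overestimate the latter sum by reverting to the larger collection of transactions containing the smaller set $X$. It is worth emphasizing that this argument uses only monotonicity and non-negativity of $u(\cdot,T)$ and never invokes subadditivity; subadditivity is instead what will make $TSMWU$ a \emph{tighter} bound than the classical $TWU$, which is a separate comparison from the downward-closure property proved here.
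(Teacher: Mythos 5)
Your proof is correct and follows essentially the same route as the paper's: monotonicity of $u(\cdot,T)$ gives the per-transaction bound $u(Z,T)\le u(T,T)=TSMU(T)$, and the containment of the transaction set of $Z$ in that of $X$, together with non-negativity, lets the sum be dominated by $TSMWU(X)$. Your write-up is in fact more careful than the paper's (which conflates $TSMU(T)$ with ``$TSMU(X)$'' in passing), and your remark that subadditivity is not needed here is accurate.
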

\begin{proof}
For any transaction $T$ and any $X \subseteq X' \subseteq T$, $u(X',T) \leq TSMU(X)$ as the set of transactions containing $X'$ is always a subset of the transactions containing $X$ and $u(\cdot, T)$ is a monotone function. Therefore, $u(X') \leq TSMWU(X)$ for all $X \subseteq X'$. 
\end{proof}

\begin{lemma} \label{tsmwu_tight_lemma}
$TSMWU(X)$ is a tighter upper-bound compared to $TWU(X)$.
\end{lemma}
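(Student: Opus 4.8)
The plan is to reduce the database-level comparison to a single-transaction comparison and then sum over the relevant transactions. Concretely, I would first establish that for every transaction $T$ we have $TSMU(T) \le TU(T)$, and then add this inequality over exactly the transactions that contain $X$.

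For the per-transaction claim, write $T = \{(x_1:n_1), \ldots, (x_k:n_k)\}$. By definition $TSMU(T) = u(T,T) = u(\{x_1,\ldots,x_k\},T)$, whereas $TU(T) = \sum_{j=1}^{k} u(\{x_j\},T)$. The key step is to decompose the full itemset into its singletons and apply subadditivity of $u(\cdot,T)$ repeatedly. Formally I would induct on $k$: the base case $k=1$ is an equality, and for the inductive step I write $\{x_1,\ldots,x_k\} = \{x_1,\ldots,x_{k-1}\} \cup \{x_k\}$, where both parts are subsets of $T$, so subadditivity gives $u(\{x_1,\ldots,x_k\},T) \le u(\{x_1,\ldots,x_{k-1}\},T) + u(\{x_k\},T)$; applying the inductive hypothesis to the first term yields $u(\{x_1,\ldots,x_k\},T) \le \sum_{j=1}^{k} u(\{x_j\},T)$, i.e. $TSMU(T) \le TU(T)$.

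Having established this pointwise bound, I would sum over the transactions containing $X$. Since $TSMWU(X)$ and $TWU(X)$ are both sums over the \emph{same} index set $\{T \in D : X \subseteq T\}$, term-by-term comparison immediately gives
$$TSMWU(X) = \sum_{X \subseteq T} TSMU(T) \le \sum_{X \subseteq T} TU(T) = TWU(X),$$
which is exactly the claimed inequality. I would note that the bound is typically strict for nontrivial SM functions such as $ucov$, since subadditivity is usually strict there; this is illustrated by the differing $TU$ and $TSMU$ columns in Table~\ref{fig:Example_2} (e.g. $TU(T_1)=58$ while $TSMU(T_1)=37$).

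I do not expect a serious obstacle in this argument. The only points requiring care are that each application of subadditivity is legitimate, i.e. both arguments are genuine subsets of $T$ (which holds throughout the induction), and that the two weighted sums range over identically the same collection of transactions, so that the pointwise inequality can be promoted to the summed inequality without any change of index set.
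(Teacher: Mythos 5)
Your proof is correct and follows essentially the same route as the paper: both reduce the claim to the per-transaction inequality $TSMU(T) = u(T,T) \le \sum_{x \in T} u(\{x\},T) = TU(T)$ via subadditivity and then sum over the transactions containing $X$. The only difference is that you spell out the induction needed to pass from binary subadditivity to the $k$-way singleton decomposition, a step the paper compresses into the phrase ``due to subadditivity.''
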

\begin{proof}
This can be easily proved since $TU(X) = \sum_{T \in D} \sum_{x \in T} u(x,T)$ \\ $\ge \sum_{T \in D} u(T,T)$ = $TSMU(X)$ due to subadditivity. Therefore, $TSMWU(X) \le TWU(X)$.
\end{proof}

\subsection{Exact-utility (EU) and Remaining-utility (RU) bounds}\label{subsec:euru}
List-based \cite{hui_miner,fhm} and projection-based \cite{efim,d2hup} HUIM algorithms use the Exact-utility (EU) and Remaining-utility (RU) bounds that are tighter compared to the TU bound explained earlier. These algorithms process items according to some fixed order and items in each transaction are sorted accordingly; let $X$ be some itemset that is being processed, $T$ be some transaction containing $X$ and $T/X$ be the {\em items appearing after $X$} in $T$. One should note that $X \cup (T/X)$ is not $T$, in particular, items appearing in $T$ {\em before} $X$ in the processing order are not in $T/X$. The exact utility $EU(X,T)$ is the {\em sum} of the utilities of the items in $X$ in $T$ and the remaining utility $RU(X,T)$ is defined as the {\em sum} of the utilities of the items in $T/X$. It can be shown that $X$ and its extensions (according to the processing order) cannot be high utility if $EU_-RU(X) = \sum_{T \supseteq X} EU(X,T) + RU(X,T)$ is less than the threshold \cite{hui_miner}; this fact is used in HUIM algorithms to decide whether to examine extensions of the currently explored itemset $X$. Now, we define combined utility in an analogous manner for subadditive monotone utility functions and analyze its applicability for itemset mining.
\begin{definition}[Combined utility ($CU$)]
    Given an itemset $X$ and a transaction $T$ with $X \subseteq T$, the combined utility of $X$ is defined as $CU(X,T) = u(X \cup T/X,T)$ and $CU(X) = \sum_{T \supseteq X} CU(X,T)$.
\end{definition}
For example, consider transaction $T_1$ from Table \ref{fig:Example_2}. Let us compute $EU(\lbrace AC\rbrace,T_1)$ \\ + $RU(\lbrace AC\rbrace,T_1)$, and $CU(\lbrace AC \rbrace,T_1)$. $EU(\lbrace AC\rbrace,T_1)+RU(\lbrace AC\rbrace,T_1)$ evaluates to 58, and $CU(\lbrace AC \rbrace,T_1)$ evaluates to 37.  

\begin{lemma} \label{cu_prune_lemma}
If $CU(X)$ is less than a threshold, any extension $X'$ of $X$ in the processing order of items is not a high-utility itemset.
\end{lemma}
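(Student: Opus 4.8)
The plan is to establish the single inequality $u(X') \le CU(X)$ for every extension $X'$ of $X$; once this is in hand, the hypothesis $CU(X) < \theta$ immediately forces $u(X') < \theta$, so no extension of $X$ can be high-utility. This mirrors the monotonicity argument behind Lemma \ref{tsmwu_prune_lemma}, but now exploits the finer, order-aware structure of $CU$ rather than the coarser $TSMU$.

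The heart of the argument is a transaction-wise comparison. I would fix any transaction $T$ with $T \supseteq X'$ and compare $u(X', T)$ against $CU(X, T) = u(X \cup T/X, T)$. Because $X'$ is an extension of $X$ in the processing order, we have $X \subseteq X'$ and every item of $X' \setminus X$ appears strictly after $X$ in the order; since each such item also lies in $T$, it belongs to $T/X$. Consequently $X' = X \cup (X' \setminus X) \subseteq X \cup T/X$, and monotonicity of $u(\cdot, T)$ yields $u(X', T) \le u(X \cup T/X, T) = CU(X, T)$.

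It then remains to assemble these pointwise bounds. Summing over all $T \supseteq X'$ gives $u(X') = \sum_{T \supseteq X'} u(X', T) \le \sum_{T \supseteq X'} CU(X, T)$. Since $X \subseteq X'$, the set of transactions containing $X'$ is a subset of those containing $X$; and because $u(\cdot, T)$ takes positive values, every term $CU(X, T)$ is non-negative, so enlarging the index set from $T \supseteq X'$ to all $T \supseteq X$ can only increase the sum. Hence $u(X') \le \sum_{T \supseteq X} CU(X, T) = CU(X)$, which completes the proof.

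The step that requires the most care is the containment $X' \subseteq X \cup T/X$, and this is precisely where the hypothesis that $X'$ is an \emph{extension in the processing order} does its work. If $X'$ were permitted to include an item appearing before $X$ in the order, that item would escape $T/X$, the set $X \cup T/X$ would no longer contain $X'$, and the monotonicity bound would break. So the crux of the argument is to track the processing order faithfully and confirm that the added items are all post-$X$ items captured by $T/X$; the non-negativity of $CU(X,T)$ (needed to grow the index set) and the inclusion of transaction sets are then routine.
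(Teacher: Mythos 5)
Your proof is correct and follows essentially the same route as the paper: establish $X' \subseteq X \cup T/X$ from the processing-order hypothesis, apply monotonicity of $u(\cdot,T)$ to get $u(X',T) \le CU(X,T)$, and sum over the (smaller) set of transactions containing $X'$. You are somewhat more careful than the paper in spelling out why enlarging the index set to all $T \supseteq X$ is harmless (non-negativity of $CU(X,T)$), but the underlying argument is identical.
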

\begin{proof}
Let $X'$ be some extension of $X$ in the processing order of items; since transactions are also (implicitly) stored in the same order, $X' \subseteq X \cup T/X$. The proof of the lemma follows easily since the set of transactions containing $X'$ is always a subset of the set of transactions containing $X$ and $u(\cdot, T)$ is a monotone function which implies that $u(X',T) \le u(X \cup T/X,T) = CU(X,T)$.
\end{proof}

\begin{lemma} \label{cu_tight_lemma}
$CU(X)$ is a tighter upper-bound compared to $EU_-RU(X)$.
\end{lemma}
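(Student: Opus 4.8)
The plan is to establish the inequality one transaction at a time and then sum over all transactions containing $X$, exactly mirroring the structure of the proof of Lemma \ref{tsmwu_tight_lemma}. So first I would fix a transaction $T$ with $X \subseteq T$ and unfold the two quantities being compared for that single transaction. By definition $CU(X,T) = u(X \cup T/X, T)$, while the corresponding summand appearing in $EU_-RU(X)$ is $EU(X,T) + RU(X,T) = \sum_{x \in X} u(\{x\}, T) + \sum_{x \in T/X} u(\{x\}, T)$.

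The crux of the argument is the subadditivity of $u(\cdot, T)$. Since subadditivity gives $u(A \cup B, T) \le u(A,T) + u(B,T)$, a straightforward induction on set size shows that for any set $S = \{s_1, \ldots, s_\ell\} \subseteq T$ we have $u(S, T) \le \sum_{j=1}^{\ell} u(\{s_j\}, T)$. Applying this with $S = X \cup T/X$ yields $CU(X,T) \le \sum_{x \in X \cup T/X} u(\{x\}, T)$.

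The final step is to observe that $X$ and $T/X$ are disjoint, since $T/X$ consists by definition of the items appearing strictly after $X$ in the processing order. Hence the single-item sum splits cleanly as $\sum_{x \in X \cup T/X} u(\{x\},T) = EU(X,T) + RU(X,T)$, giving $CU(X,T) \le EU(X,T) + RU(X,T)$ for every $T \supseteq X$. Summing over all such transactions then produces $CU(X) \le EU_-RU(X)$, which is the claim.

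I do not expect a serious obstacle here: the whole argument reduces to repeated use of subadditivity, just as in Lemma \ref{tsmwu_tight_lemma}. The only point requiring a little care is the disjointness of $X$ and $T/X$, which is what guarantees that no single-item utility is double-counted when the upper bound is split into its exact-utility and remaining-utility parts; without it the decomposition into $EU(X,T)$ and $RU(X,T)$ would not be valid.
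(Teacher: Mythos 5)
Your proposal is correct and is essentially the paper's own argument: both reduce the claim to iterated subadditivity per transaction, the only cosmetic difference being that the paper first bounds $EU(X,T)\ge u(X,T)$ and $RU(X,T)\ge u(T/X,T)$ and then applies subadditivity once more to the union, whereas you decompose $u(X\cup T/X,T)$ into singletons in one step and split the sum using disjointness of $X$ and $T/X$. No gap.
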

\begin{proof}
First observe that subadditivity implies that $EU(X,T) = \sum_{x \in X} u(x,T)$ \\ $\ge u(X,T)$. Similarly it follows that $RU(X,T) \ge u(T/X,T)$. Therefore, we immediately get $EU(X,T) + RU(X,T) \ge u(X,T) + u(T/X,T) \ge u(X \cup T/X,T) = CU(X,T)$ (second inequality is again due to subadditivity). Therefore it follows that $EU_-RU(X) \ge CU(X)$.
\end{proof}
It should be noted that $CU(X)=EU_-RU(X)$ when $Sum$ (defined in Section~\ref{sec:our}) is used as the utility function. The results of Subsections~\ref{subsec:tu} and \ref{subsec:euru} show that we can use $TSMWU$ in place of $TWU$ and $CU$ in place of $EU_-RU$ in HUIM algorithms. We illustrate these bounds in our example transaction database presented in Table~\ref{fig:Example_2} for $ucov$. In the following subsection, we design a new data structure called SMI-list and an algorithm SM-Miner. We also show how to adapt existing tree-based and projection-based HUIM algorithms to our general setting and highlight important changes that must be incorporated in them.

\subsection{List-based algorithm}
List-based algorithms \cite{hui_miner,fhm} for HUIM maintains a utility-list consisting of tuples (Transaction id (Tid), Exact-utility (EU(X,T)), Remaining-utility (RU(X,T))) with each itemset X. List-based algorithms scan the transaction database to construct the utility-list for promising items i.e. items with TWU no less than the minimum utility threshold. The utility-list for a $\lbrace k \rbrace$-itemset (an itemset consisting of k items) is constructed intersecting lists of two $\lbrace k-1 \rbrace$ itemsets and the prefix itemset. Two variables sumEU, and sumRU store the sum of EU(X,T), and RU(X,T) for all transactions which contain itemset X. 

We propose a data structure called SMI-list to store tuples of the form (Transaction Id (Tid), Current Weighted Itemset (CWI), Remaining Weighted Itemset (RWI)). The Current Weighted Itemset of an SMI-list for an itemset X stores the item-quantity information for all items in X present in a transaction. The Remaining Weighted Itemset stores the items with their quantities which appear after X in a transaction. The variables SumEU, and SumCU accumulate the EU and CU value during the construction of the inverted-list.  A variable Combined utility(CU) stores the Combined Utility (CU(X)) bound with the utility-list for every itemset X. The construction process is shown in Algo \ref{algo_ds}. We observe that there is no need to scan the SMI-list of the prefix while constructing the list for a $k$-itemset, unlike required by some algorithms like HUI-Miner, FHM \cite{hui_miner,fhm} for HUIM.   

\begin{algorithm}
	\caption{Construct-SMI-List ($Ix$,$Iy$,$f(\cdot)$)}
	\textbf{Input:} $Ix$: SMI-List of $Ix$, $Iy$: SMI-List of $Iy$, $f(\cdot)$: subadditive monotone utility function.\\
	\textbf{Output:} SMI-List of $Ixy$.
	\label{algo_ds}
	\begin{algorithmic}[1]
	    \State $Ixy=\emptyset$
		\For{each element $Ex$ in $Ix$}
			\If {$\exists Ey \in Iy$ and $Ex.Tid==Ey.Tid$} 
				\State $Exy=\langle Ex.Tid, Ex.CWI \cup Ey.CWI, Ey.RWI \rangle$ 
				\State Append $Exy$ to $Ixy$
				\State $Ixy.sumEU+=f(Ex.CWI \cup Ey.CWI)$
				\State $Ixy.sumCU+=f(Ex.CWI \cup Ey.CWI \cup Ey.RWI)$
			\EndIf
		\EndFor
		\State Return $Ixy$
	\end{algorithmic}
\end{algorithm}

\begin{table}
	
	\begin{center}
		%	{| l | l | l |l|l}
		\caption{SMI-List of $\lbrace A \rbrace$ \label{Tab:A}}
		\begin{tabular}{ l l l}
			\hline
			\bfseries{TID} & \bfseries{CWI} & \bfseries{RWI}   \\ \hline
			1 & $\{ (A:5)\}$ & $\{ (C:10),(D:2)\}$\\
			2 & $\{ (A:10)\}$ & $\{ (C:6),(E:6),(G:5)\}$\\
			3 & $\{ (A:10)\}$ & $\{ (B:4),(D:12),(E:6),(F:5)\}$\\
			4 & $\{ (A:5)\}$ & $\{ (B:2),(C:3),(D:2),(H:2)\}$\\ \hline
				
		\end{tabular}
	\end{center}
\end{table}

\begin{table}
	
	\begin{center}
		%	{| l | l | l |l|l}
		\caption{SMI-List of $\lbrace B \rbrace$ \label{Tab:B}}
		\begin{tabular}{ l l l}
			\hline
			\bfseries{TID} & \bfseries{CWI} & \bfseries{RWI}   \\ \hline
			3 & $\{ (B:4)\}$ & $\{ (D:12),(E:6),(F:5)\}$\\
			4 & $\{ (B:2)\}$ & $\{ (C:3),(D:2),(H:2)\}$\\
			5 & $\{ (B:8)\}$ & $\{ (C:13),(D:6),(E:3)\}$\\
			6 & $\{ (B:4)\}$ & $\{ (C:4),(E:3),(G:2)\}$\\
				\hline	
		\end{tabular}
	\end{center}
\end{table}

\begin{table}
	
	\begin{center}
		%	{| l | l | l |l|l}
		\caption{SMI-List of $\lbrace AB \rbrace$ \label{Tab:AB}}
		\begin{tabular}{ l l l}
			\hline
			\bfseries{TID} & \bfseries{CWI} & \bfseries{RWI}   \\ \hline
			3 & $\{ (A:10),(B:4)\}$ & $\{ (D:12),(E:6),(F:5)\}$\\
			4 & $\{ (A:5),(B:2)\}$ & $\{ (C:3),(D:2),(H:2)\}$\\ \hline
				
		\end{tabular}
	\end{center}
\end{table}

Our proposed algorithm called SM-Miner ( Algorithm \ref{algo_vert}) takes as input a SMI-List  of a prefix (say $\alpha$), SMI-List of $\alpha$ 1-extensions, minimum utility threshold ($\theta$), and a subadditive monotone utility function ($f(\cdot)$). Initially, the database is scanned to compute the TSMWU of the items present in the transaction database as TSMWU is a tighter bound then TWU as per Lemma \ref{tsmwu_tight_lemma}. Another database scan is performed to remove items with TSMWU less than $\theta$ (Lemma \ref{tsmwu_prune_lemma}) and construct the SMI-List of the remaining items present in the transaction database. For example, the SMI-List of $\lbrace A \rbrace$, and $\lbrace B \rbrace$ is shown in Table \ref{Tab:A}, and Table \ref{Tab:B} respectively. SM-Miner explores the search space in a depth-first search manner and returns the complete set of high-utility itemsets. SM-Miner constructs the SMI-List of a $\lbrace k \rbrace$-itemsets from SMI-List of two $\lbrace k-1 \rbrace$-itemsets. As an example, the SMI-List of $\lbrace AB \rbrace$ is shown in Table \ref{Tab:AB}. Two variables SumEU, and SumCU accumulate the EU and CU values during the construction of the SMI-List of an itemset as CU is a tighter bound (Lemma \ref{cu_tight_lemma}). If SumCU for an itemset X is less than $\theta$, X and its supersets can't be high-utility as per Lemma \ref{cu_prune_lemma}.   
\begin{algorithm}
	\caption{SM-Miner (SMI-List of $\alpha$ ,$Ext\_I$,$\theta$,$f(\cdot)$)}
	\textbf{Input:} SMI-List of prefix $\alpha$ (initially empty) ,$Ext\_I$: SMI-List of $\alpha$ 1-extensions, $\theta$: a user-specified threshold, $f(\cdot)$: subadditive monotone utility function.\\
	\textbf{Output:} the set of high-utility itemsets with $\alpha$ as prefix.
	\label{algo_vert}
	\begin{algorithmic}[1]
		\For{each SMI-List $Ix$ in $Ext\_I$}
			\If {$Ix.sumEU \geq \theta$} 
				\State Ix is a high-utility itemset
			\EndIf
			\If {$\, Ix.sumCU \geq \theta$} \Comment{Estimate the utility of supersets of Ix }
				\State $Ext\_Ix$=$\emptyset$
				\For{each SMI-List $Iy$ after $Ix$ in $Ext\_I$}
					
					\State $Ixy$=Construct-SMI-List($Ix$,$Iy$,$f(\cdot)$)
					\State $Ext\_Ix$=$Ext\_Ix \cup Ixy$
				\EndFor
				\State SM-Miner ($Ix$,$Ext\_Ix$,$\theta$,$f(\cdot)$)
			\EndIf
		\EndFor
		
	\end{algorithmic}
\end{algorithm}

\subsection{Tree-based algorithm}
A tree-based algorithm like UP-Growth and UPGrowth+ \cite{up_tree} creates a tree data structure from transaction database which is then used to get potential high utility candidates. Every node of tree stores the item name, support, upper-bound utility estimate like TWU, pointer to the parent node, and a set of child nodes. Tree-based algorithms generate potential candidates during the first phase. In the next phase, a database scan is performed to find the exact high-utility itemsets by computing the exact utility value for the generated candidates. Tree-based algorithms create a global tree structure before starting the mining process. Local trees are recursively created from the global tree only to generate candidates. An itemset $X$ is unpromising in a transaction database if $TWU(X)$ is less than the minimum utility threshold $\theta$. Tree-based algorithms remove unpromising items during global and local tree creation to compute better utility estimates. We observe that unpromising items can be removed during global tree creation as it's possible to recompute the utility of a transaction after removing the unpromising items for any arbitrary subadditive function. However, we observe that removing unpromising items during local tree creation may not give correct upper-bound estimates for any arbitrary subadditive function. 

Let $T = \{ (A_1:q_1), \ldots (A_n:q_n) \}$ be a transaction with positive quantity associated with every item in $T$. Let $X$ be an itemset from T: $X = \lbrace (A_1) \rbrace$. Let $Y$ be an itemset with ($n-1$) items from T: $Y = \lbrace (A_2, \ldots ,A_n) \rbrace$. Let us assume that the item $A_1$ is unpromising i.e. $TSMWU(A_1)$ is less than the minimum utility threshold $\theta$. Item $A_1$ can't be a part of any high-utility itemset. It can be easily verified that $u(X,T) + u(Y,T) = u(X\cup Y,T)$ when u($\cdot$) is the $Sum(\cdot)$ defined for HUIM. Therefore, $u(Y,T) = u(X \cup Y,T) - u(X,T)$ gives a correct bound for HUIM. However, computing tighter utility estimates by removing unpromising items can result in incorrect upper bound estimate i.e. resulting in false negatives for those functions where $f(X,T) + f(Y,T) > f(X\cup Y,T)$. 

Tree-based algorithms for HUIM can be adapted with minimal change for an arbitrary subadditive monotone function. There is no change in the tree data structure, and removing unpromising items during local tree creation must be disabled during the candidate generation phase. The tree construction and verification phase remain the same. This leads to direct use of utility values associated with the tree nodes in the mining process. However, the absence of removing unpromising items during local tree creation can result in the generation of a large set of potential candidates. 

\subsection{Projection-based algorithm}
 Projection-based algorithms like EFIM \cite{efim} merge two identical transactions to reduce the cost for database scan during the construction of the projected database for an itemset.
 
 Let $T = \{ (A_1:q_1), \ldots (A_n:q_n) \}$ be a transaction with positive quantity associated with every item in $T$. Let $S = \{ (A_1:r_1), \ldots (A_n:r_n) \}$ be another transaction with positive quantity associated with every item in $S$. Let $M = \{ (A_1:q_1+r_1), \ldots (A_n:q_n+r_n) \}$ be another transaction with positive quantity associated with every item in $M$. Merging transactions does not change the utility of the itemset $\langle A_1,A_2,....A_n \rangle$ in the database for the $Sum$ function defined by HUIM. However, utility of an itemset in the merged transaction can be less than the sum of its utility in individual transactions for an arbitrary subadditive function. Consider the transactions $T_7$ and $T_8$ in Figure \ref{fig:Example_2}. The utility ($ucov$) of the itemset $\lbrace FG \rbrace$ in $T_7$ and $T_8$ is 7 and 15 respectively. Consider merging transactions $T_7$ and $T_8$ to a single transaction $M = \{ (F:5), \ldots (G:5) \}$. The utility of itemset $\lbrace FG \rbrace$ in this merged transaction i.e. u($\lbrace FG \rbrace$,M) is 20. Therefore, transaction merging can change the utility of an itemset in the database. We disable transaction merging when adapting projection-based algorithms like EFIM for an arbitrary subadditive monotone function. 
 
 \section{Case study of HUIM-SM with coverage functions} \label{sec:twitter}
 We conduct an experimental study on a publicly available Twitter dataset to emphasize that utility functions like $ucov$ can be designed that can combine information from a transaction database and domain knowledge from an external data source in the form of a graph. We extract the top-100 patterns extracted by $ucov$ and other mining methodologies like frequent itemset mining and high-utility itemset mining. The Twitter-Dynamic-Net dataset\footnote{The dataset called Twitter Dynamic Net is publicly available as a part of the AMiner repository.( \url{https://aminer.org/data-sna})} was constructed by selecting "Lady Gaga" on Twitter, and randomly selecting 10,000 of her followers. The selected followers were taken as seed users, and all their followers were collected by a crawler. Every tweet has information like user name, tweet id, timestamp, and retweet by user name, etc. fields associated with it. We construct a transaction database and a directed follower-followee graph from the dataset. Every transaction captures the users who were active during the period associated with the transaction. The quantity associated with a user is the number of tweets posted by that user in the time interval associated with the transaction. The follower-followee graph has users as vertices, and a directed edge from a user 'A' to user 'B' if 'B' has retweeted at least one tweet posted by user 'A' in the complete dataset. We fix the time duration to three hours and construct a transaction database whose statistics are given in Table \ref{Tab:twitter}. The followee-follower graph constructed from the dataset contains 14,766 followees and 20,423 followers with 46,164 edges between followees and followers. The average degree of a followee is 3, and the maximum degree is 48. Every user in the constructed graph has at least one follower. 
\begin{table}
	
	\begin{center}
		%	{| l | l | l |l|l}
		\caption{Characteristics  of  Twitter  transaction  dataset \label{Tab:twitter}}
		\begin{tabular}{ l l  l l l}
			\hline
			\bfseries{Dataset} & \bfseries{\#Tx} & \bfseries{Avg. length}& \bfseries{\#Items} & \bfseries{Max. length }   \\ \hline
			Twitter & 2,631 & 87 & 14,766 & 381 \\ \hline
		\end{tabular}
	\end{center}
\end{table}
 
We compare the top-100 patterns extracted by $ucov$ with the existing baselines: frequent itemset mining ($fim$), and HUIM ($sum$). We also define two new baseline functions by integrating domain knowledge captured in the form a graph with frequent itemset mining, and HUIM. We call our-defined baseline as frequent itemset mining with coverage ($fcov$), and HUIM with coverage ($sumcov$) respectively. 

Frequent itemset mining ($fim$) is applied to the transaction dataset only to extract top-k frequent itemsets. $fcov$ for an itemset returns the frequency of the itemset in the transaction database multiplied by its graph coverage $Co(\cdot)$. $sum$ function captures the high-utility itemsets from the transaction database only. To integrate the coverage graph in the existing framework of high-utility itemset mining, we define the utility of an item in a transaction as the product of its quantity and coverage from the directed follower-followee graph. The utility of an itemset in a transaction is simply the sum of the utility of its items. We call this function as $sumcov$. It can be observed that $fim(\cdot,T)$, $fcov(\cdot,T)$, $sum(\cdot,T)$, and $sumcov(\cdot,T)$ for a transaction T are a subadditive monotone functions.  

We compare the overlap among top-100 patterns generated by $ucov$ with $fim$, $fcov$, $sum$ and $sumcov$ in Table \ref{Tab:ucov}. It can be observed that $ucov$ extracts different patterns compared to $fim$, and $sum$ functions. There is a significant overlap in the top-100 patterns generated by $sumcov$, and $ucov$. We claim that the patterns generated by $ucov$ will always be a subset of the patterns generated by $sumcov$ for the same minimum utility threshold. We present a formal proof for our claim in Theorem \ref{thm_sumcov}. We observed empirically that $sumcov$ generates a large number of patterns compared by $ucov$ for a fixed utility threshold. For example, $sumcov$ generated 10,29,921 patterns while $ucov$ generated only 681 patterns when the minimum utility threshold was set to 10,000 for the Twitter dataset used in our experimental study.

We also analyze the pattern length and $Co(\cdot)$ for the top-100 patterns generated by the different utility functions. Table \ref{Tab:length_stats} shows the distribution of pattern length i.e. number of items contained in a pattern by different functions for the top-100 patterns.  It can be observed that $fim$ generates patterns containing one item only. 

\begin{table}
	
	\begin{center}
		%	{| l | l | l |l|l}
		\caption{\small Overlap between the top 100 patterns  generated by $fim$, $fcov$, $sum$, and $sumcov$ with $ucov$.  \label{Tab:ucov}}
		\begin{tabular}{ l l l}
			\hline
			\bfseries{Function} & \bfseries{Total number of patterns} & \bfseries{Patterns common with $ucov$}\\ \hline
			$fim$ & 100 & 21\\
			$fcov$ & 100 & 37 \\
			$sum$ & 100 & 16\\
			$sumcov$ & 100 & 71\\
			 \hline
		\end{tabular}
	\end{center}
\end{table}

\begin{table}
	
	\begin{center}
		%	{| l | l | l |l|l}
		\caption{\small Statistics showing  distribution of pattern length for top 100 patterns generated by $fim$, $fcov$, $sum$, $sumcov$, and $ucov$.    \label{Tab:length_stats}}
		\begin{tabular}{ l l  l l  l l }
			\hline
			\bfseries{Statistics} & \bfseries{$fim$} & \bfseries{$fcov$}& \bfseries{$sum$} &  \bfseries{$sumcov$} &  \bfseries{$ucov$}\\ \hline
			Minimum & 1 & 1 & 1 & 1 & 1   \\
			Maximum & 1 & 2 & 316 & 7 & 5  \\
			Mean & 1 & 1.4 & 249 & 2.5 & 2.06   \\
			Median & 1 & 1 & 315 & 2 & 2  \\
			
			 \hline
		\end{tabular}
	\end{center}
\end{table}

\begin{table}[!h]
	
	\begin{center}
		%	{| l | l | l |l|l}
		\caption{\small Statistics showing  distribution of $Co(\cdot)$ for top 100 patterns generated by $fim$, $fcov$, $sum$, $sumcov$, and $ucov$.  \label{Tab:co_stats}}
		\begin{tabular}{ l l  l l  l l }
			\hline
			\bfseries{Statistics} & \bfseries{$fim$} & \bfseries{$fcov$}& \bfseries{$sum$} &  \bfseries{$sumcov$} &  \bfseries{$ucov$} \\ \hline
			Minimum & 1 & 12 & 7 & 16 &  16  \\
			Maximum & 40 & 64 & 1798 & 105 & 79  \\
			Mean & 14.97 & 30.62 & 1421 & 42.7 & 38.9  \\
			Median & 15.5 & 29 & 1793 & 38 & 38 \\
			
			 \hline
		\end{tabular}
	\end{center}
\end{table}

\begin{table}[!h]
	
	\begin{center}
		%	{| l | l | l |l|l}
		\caption{\small Statistics showing  distribution of $f(\cdot)$ for top 100 patterns generated by $fim$, $fcov$, $sum$, $sumcov$, and $ucov$.  \label{Tab:f_stats}}
		\begin{tabular}{ l l  l l  l l }
			\hline
			\bfseries{Statistics} & \bfseries{$fim$} & \bfseries{$fcov$}& \bfseries{$sum$} &  \bfseries{$sumcov$} &  \bfseries{$ucov$} \\ \hline
			Minimum & 179 & 68 & 1 & 8 & 15  \\
			Maximum & 482 & 462 & 462 & 462 & 462  \\
			Mean & 231 & 181 & 40.9 & 99.6 & 115  \\
			Median & 213 & 170 & 1 & 67.5 & 73 \\
			
			 \hline
		\end{tabular}
	\end{center}
\end{table}

The $sum$ function is applied on the transaction database only and it generates very long patterns with a length greater than 300. However, our preference is to find patterns with a reasonable length as it's infeasible to give a discount to many active users so that they influence their set of followers for applications like viral marketing. Table \ref{Tab:co_stats} shows the distribution of graph coverage $Co(\cdot)$ for different functions. Frequent itemset mining and high-utility itemset mining ignores the domain knowledge captured in the form of a graph completely and only considers the transaction database to generate patterns. It can observed that the patterns generated by $ucov$ have better $Co(\cdot)$ compared to the patterns generated by $fim$, and $sum$. Table \ref{Tab:f_stats} shows the distribution of the frequency i.e. the number of transactions for the top-100 patterns generated by different functions. The $sum$ function generates patterns with very low frequency as longer patterns are usually present in few transactions in a transaction database. The frequent itemset mining functions generate patterns that have frequency higher than $sum$, $sumcov$ and $ucov$.

%Theorem
\begin{theorem}\label{thm_sumcov}
	For a fixed minimum user-defined threshold $\theta$, the set of high-utility patterns generated by $ucov(\cdot)$ will always be a subset of the patterns generated by $sumcov(\cdot)$.
\end{theorem}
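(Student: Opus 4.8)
The plan is to prove the stronger quantitative statement that $ucov(X) \le sumcov(X)$ for every itemset $X$; the theorem then follows at once, since any $X$ with $ucov(X) \ge \theta$ automatically satisfies $sumcov(X) \ge ucov(X) \ge \theta$ and is therefore reported by $sumcov$ as well. Both utilities aggregate over the \emph{same} support set, namely $u(X) = \sum_{X \subseteq T} u(X,T)$ where the transactions $T \supseteq X$ are fixed by the database and independent of the utility function. Hence it suffices to prove the per-transaction inequality $ucov(X,T) \le sumcov(X,T)$ for every transaction $T$ containing $X$, and then sum over all such $T$.

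To prove the per-transaction inequality, I would fix $T$, order the $k$ items of $X$ by quantity as $q_1 \le \cdots \le q_k$, and write $ucov(X,T)$ in its defining telescoped form $q_1\, Co(\{x_1,\ldots,x_k\}) + \sum_{j=2}^{k}(q_j - q_{j-1})\, Co(\{x_j,\ldots,x_k\})$, recalling that by definition $sumcov(X,T) = \sum_{i=1}^{k} q_i\, Co(\{x_i\})$. The only tool needed is subadditivity of $Co(\cdot)$ (already invoked throughout Section~\ref{sec:our}): applied repeatedly it gives $Co(\{x_j,\ldots,x_k\}) \le \sum_{i=j}^{k} Co(\{x_i\})$ for each tail set. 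Substituting this bound into every coverage term of $ucov(X,T)$ is legitimate because $q_1 > 0$ and each difference $q_j - q_{j-1}$ is nonnegative by the chosen ordering, so the inequalities all point the same way and convert the expression into a nonnegative combination of the singleton coverages $Co(\{x_i\})$.

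The crux is then a coefficient-collection step, which I expect to be the only delicate part of the argument. For each fixed $i$, I would total the multiplier of $Co(\{x_i\})$ after substitution: the leading term contributes $q_1$ to every $Co(\{x_i\})$, while the term indexed by $j$ contributes $q_j - q_{j-1}$ to each $Co(\{x_i\})$ with $j \le i$, and this latter sum telescopes to $q_i - q_1$. The total coefficient of $Co(\{x_i\})$ is therefore exactly $q_1 + (q_i - q_1) = q_i$, giving $ucov(X,T) \le \sum_{i=1}^{k} q_i\, Co(\{x_i\}) = sumcov(X,T)$. The care required here is to confirm that the upper index of each tail sum lines up with the range over which the quantity differences telescope, so that the coefficients recombine \emph{precisely} into the per-item quantities $q_i$ rather than merely being dominated by them. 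Summing this per-transaction bound over all $T \supseteq X$ yields $ucov(X) \le sumcov(X)$, and hence the $ucov$ high-utility patterns form a subset of the $sumcov$ high-utility patterns for any fixed $\theta$.
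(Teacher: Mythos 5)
Your proposal is correct and follows essentially the same route as the paper: both arguments reduce to the per-transaction inequality $ucov(X,T) \le sumcov(X,T)$, apply subadditivity of $Co(\cdot)$ to each tail set $\{x_j,\ldots,x_k\}$, and use the telescoping decomposition $q_i = q_1 + \sum_{j=2}^{i}(q_j - q_{j-1})$ to match coefficients — the paper's Table~\ref{Tab:proof3} is exactly your coefficient-collection step written out row by row from the $sumcov$ side rather than the $ucov$ side. Your write-up is, if anything, slightly more careful in making the nonnegativity of the increments and the exact recombination into $q_i$ explicit.
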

\begin{proof}
Let $T = \{ (A_1:q_1), \ldots (A_n:q_n) \}$ be a transaction with positive quantity associated with every item in $T$. Suppose that the items are ordered  such that $q_1 \leq \cdots \leq q_n$. Let $X$ be an itemset with $n$ items from T: $X = \lbrace A_1,A_2, \ldots ,A_n \rbrace$. Let's analyze the functions $ucov(X,T)$ and $sumcov(X,T)$.  $ucov(X,T)$ = $(q_1)\times$Co($X$) + $(q_2-q_1)\times$Co($\lbrace A_2\cdots A_n \rbrace$)+ $\ldots$ +$(q_n)\times$Co($\lbrace A_n \rbrace$). $sumcov(X,T)$ = $(q_1)\times$Co($\lbrace A_1 \rbrace$) + $(q_2)\times$Co($\lbrace A_2 \rbrace$) + $(q_3)\times$Co($\lbrace A_3 \rbrace$) + \\ $\ldots$ +$(q_n)\times$Co($\lbrace A_n \rbrace$). It can be quickly verified that the first term in $ucov(X,T)$ i.e. $(q_1)\times$Co($\lbrace A_1\cdots A_n \rbrace$) $\leq$ $(q_1)\times$ (Co($\lbrace A_1 \rbrace$) + Co($\lbrace A_2 \rbrace$)+...+Co($\lbrace A_n \rbrace$)). Similarly, $(q_2-q_1)\times$Co($\lbrace A_2\cdots A_n \rbrace$) $\leq$ $(q_2-q_1)\times$ (Co($\lbrace A_2 \rbrace$) + Co($\lbrace A_3 \rbrace$)+ \\ $\ldots$+Co($\lbrace A_n \rbrace$)). The terms in $sumcov(X,T)$ can be expanded such that $ucov(X,T)$ is less than or equal to $sumcov(X,T)$ as shown in Table \ref{Tab:proof3}. It proves that $sumcov(\cdot)$ can assign an equal or higher score to the patterns compared to $ucov(\cdot)$. All patterns with $ucov(\cdot)$ no less than $\theta$ will have $sumcov(\cdot)$ no less than $\theta$ too. There can be extra patterns generated in the result set by $sumcov(\cdot)$ for a fixed $\theta$ compared to the patterns generated by $ucov(\cdot)$ as $sumcov(\cdot)$ can overestimate the coverage of patterns if the set of vertices covered by individual items present in the pattern overlap.
\end{proof}

\begin{table}
	\begin{center}
		%	{| l | l | l |l|l}
		\caption{\small Comparison of ucov(X,T) with sumcov(X,T) (Theorem \ref{thm_sumcov})\label{Tab:proof3}}
		\scalebox{0.68}{
		\begin{tabular}{ p{3.5cm} p{4.5cm}  p{0.5cm} p{8cm}}
			\hline
			\bfseries{Quantity} & \bfseries{ucov(X,T)} & \bfseries{$\leq$}& \bfseries{sumcov(X,T)} \\ \hline
			$m_1=q_1$ & $m_1\times$Co($\lbrace A_1\cdots A_n \rbrace$) & $\leq$& $m_1\times$ ( Co($\lbrace A_1 \rbrace$) + Co($\lbrace A_2 \rbrace$) +$\ldots$+Co($\lbrace A_n \rbrace$)  )\\
			
			$m_2=q_2-q_1$ & $m_2\times$Co($\lbrace A_2\cdots A_n \rbrace$) & $\leq$ & $m_2\times$ ( Co($\lbrace A_2 \rbrace$) + Co($\lbrace A_3 \rbrace$) +$\ldots$+Co($\lbrace A_n \rbrace$)  ) \\
			.& & & \\
			.& & & \\
			.& & & \\
			$m_{n}=q_{n}-q_{n-1}$ & $m_{n}\times$Co($\lbrace A_n \rbrace$) & = & $m_{n}\times$Co($\lbrace A_n \rbrace$)\\
			
			\hline
		\end{tabular}
		}
	\end{center}
	%\vspace{-14mm}
\end{table}
%Theorem
For example, $ucov({ACD}, T_1)$ in the database in Figure~\ref{fig:Example_2} is 37. \\ $sumcov(\lbrace ACD \rbrace, T_1)$ can be computed as $5 \times Co(\lbrace A \rbrace) + 10 \times Co(\lbrace C \rbrace) + 2 \times Co(\lbrace D \rbrace) = 5\times 4 + 10\times 3 + 2\times 4 = 58$. 

\section{Experiments and results}\label{sec:Experiments and Results}
In this section, we compare the performance of algorithms from the category of tree-based, list-based, and projection-based algorithms. We specifically ask which category algorithms performs best on Sparse and Dense datasets? Do we get the same performance trends as we get for HUIM? The results of different datasets are shown in Figure \ref{fig:real_sparse_perf_coverage}.
 
\noindent \textbf{Experimental Setup:} We choose UP-Growth+ \cite{up_tree} from tree-based, SM-Miner, EFIM \cite{efim} and D2HUP \cite{d2hup} from projection-based algorithms. We obtained the Java source code of UP-Growth+, EFIM and D2HUP algorithm from the SPMF library \cite{spmf} and adapt them to mine patterns for any arbitrarily subadditive monotone function as described in Section 5. We call the algorithms adapted for SM functions as UPG+SM, EFIMSM, and D2HUPSM respectively. We show the experiment results to observe if there is any change in performance trend compared to HUIM for a complex recursive utility function like $ucov$.

The experiments were performed on an Intel Xeon(R) CPU=26500@2.00 GHz with 64 GB RAM and Windows Server 2012 operating system. We conduct experiments on the ChainStore, Kosarak, Mushroom, and Accidents datasets obtained from the SPMF library \cite{spmf}. The datasets vary in the number of transactions, the number of items, and the average transaction length as shown in Table \ref{Tab:datasets}. 
\begin{table}
	
	\begin{center}
		%	{| l | l | l |l|l}
		\caption{$ Characteristics \, of \, real \, datasets$ \label{Tab:datasets}}
		\begin{tabular}{ l l  l l  l}
			\hline
			\bfseries{Dataset} & \bfseries{\#Tx} & \bfseries{Avg. length}& \bfseries{\#Items} &  \bfseries{Type}  \\ \hline
			Chainstore & 11,12,949 & 7.2 & 46,086 & Sparse  \\
			Kosarak & 9,90,002 & 8.1 & 41,270 & Sparse \\
			Mushroom & 8,124 & 23 & 119 & Dense \\
			Accidents & 3,40,183 & 33.8 & 468 & Dense \\ \hline
		\end{tabular}
	\end{center}
\end{table}
The coverage graph was constructed from the transaction database by taking the set of distinct items present in the database as vertices and linking them by edges. The average degree of a vertex in the coverage graph is four. The metrics for performance measure are total execution time, the number of explored candidates, and the number of utility function calls (UFC). Every algorithm is executed five times, and an average is taken for the total execution time. In our experiments, the utility values are expressed in terms of percentage of the utility threshold at which at least one candidate itemset is reported.

\noindent \textbf{Result-1: Which category algorithm performs better on Sparse datasets?} We observe that tree-based algorithm UPG+SM and list-based algorithm SM-Miner perform better than projection based algorithms EFIMSM and D2HUPSM on Sparse datasets. The reason for this behavior is less number of utility function calls by UPG+SM. The UPG+SM algorithm does not call the utility function during the candidate generation phase. The function calls are made only during the tree construction and verification phase. However, EFIMSM and D2HUPSM identify promising items during every recursive call, unlike SM-Miner and UPG+SM. Hence, the number of function calls by EFIMSM, D2HUPSM is more compared to SM-Miner and UPG+SM. There is an outlier at 14 $\%$ threshold on Kosarak dataset which is due to a large number of candidates generated by UPG+SM and the overhead for candidate verification dominates the execution time. 

\noindent \textbf{Result-2: Which category algorithm performs better on Dense}\\ \textbf{datasets?} SM-Miner performs the best on the Mushroom dataset and EFIMSM performs the best on Accidents dataset. UPG+SM performs the worst for Mushroom and Accidents dense datasets as it generates lots of candidates, and the verification phase dominates the runtime performance for tree-based algorithms. The execution for UPG+SM didn't complete on Accidents dataset even after 24 hours and was terminated. EFIMSM performs the best for Accidents dataset as it generates the least number of candidates as well as the number of calls to the utility computation function. SM-Miner performs the best for Mushroom dataset followed by EFIMSM.

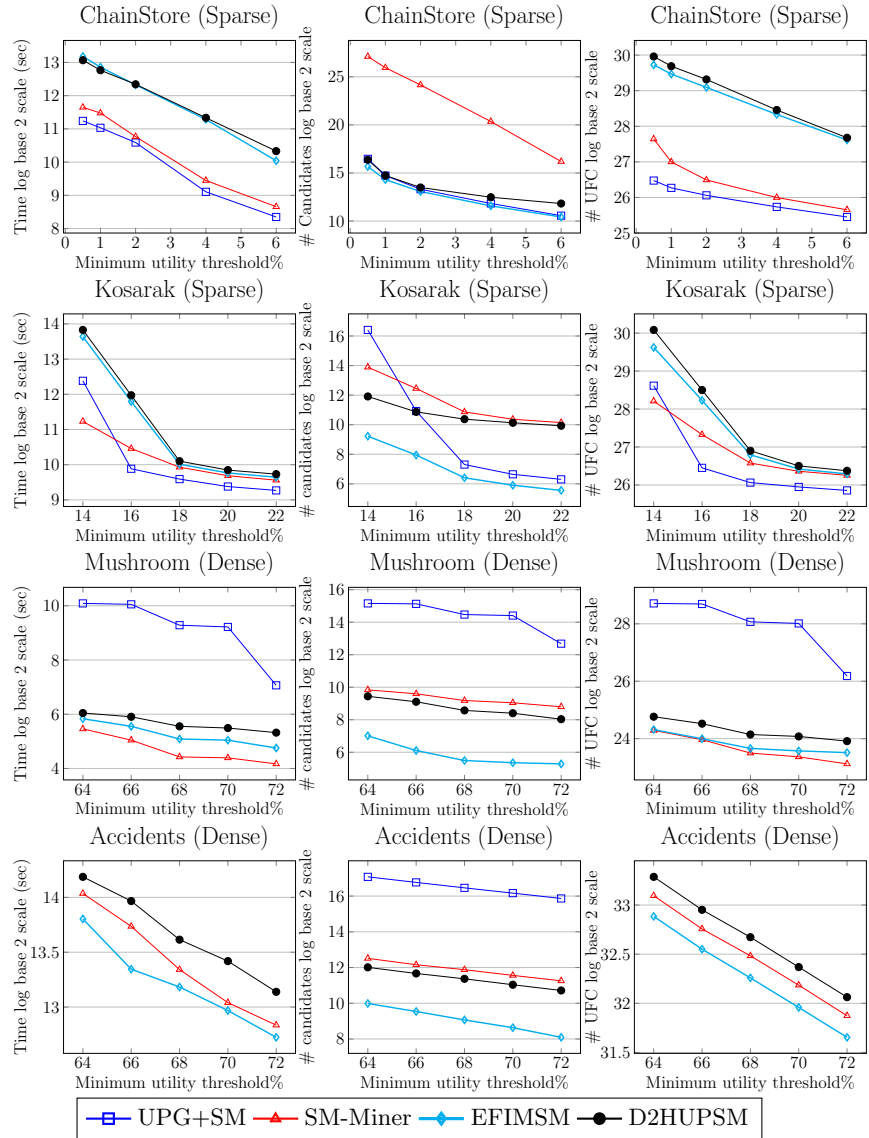
\begin{figure}
%\vspace{-4mm}
	\begin{center}
		\begin{tabular}{ccc}
			\centering
			
			\begin{minipage}{.27\linewidth}
				\begin{tikzpicture}[scale=0.45]
				\begin{axis}[
				title={\huge ChainStore (Sparse)},
				xlabel={Minimum utility threshold$\%$ },
				ylabel={Time log base 2 scale (sec)},
				enlarge y limits=true,
					legend to name=t2,
				legend style={at={($(0,0)+(1cm,1cm)$)},
					legend columns=4,fill=none,draw=black,anchor=center,align=center},
				%xtick={0,5,10,15,20,25},
				%ytick={0,2000,4000,6000,8000,10000,12000},
				%legend pos=outer north east,
				ymajorgrids=true,
				mark size=3pt,
				label style={font=\Large},
                tick label style={font=\Large} 
				]
				
				\addplot[
				color=blue,
				mark=square,
				style=thick,
				]
				coordinates {
					(0.5,{log2(2419)})(1,{log2(2096)})(2,{log2(1541)})(4,{log2(551)})(6,{log2(326)})
				};
				\addlegendentry{UPG+SM}
				\addplot[
				color=red,
				mark=triangle,
				style=thick,
				]
				coordinates {
					(0.5,{log2(3214)})(1,{log2(2851)})(2,{log2(1743)})(4,{log2(698)})(6,{log2(404)})
				};    
				\addlegendentry{SM-Miner}
				\addplot[
				color=cyan,
				mark=diamond,
				style=very thick,
				]
				coordinates {
					(0.5,{log2(9272)})(1,{log2(7428)})(2,{log2(5136)})(4,{log2(2492)})(6,{log2(1054)})
				};    
				\addlegendentry{EFIMSM}
				\addplot[
				color=black,
				mark=*,
				style=thick,
				]
				coordinates {
					(0.5,{log2(8580)})(1,{log2(6961)})(2,{log2(5194)})(4,{log2(2582)})(6,{log2(1288)})
				};    
				\addlegendentry{D2HUPSM}
				\end{axis}
				
				\end{tikzpicture}
			\end{minipage}
			\hspace{0.3cm}
			
			&
			
			\begin{minipage}{.27\linewidth}
				\begin{tikzpicture}[scale=0.45]
				\begin{axis}[
				title={\huge ChainStore (Sparse)},
				xlabel={Minimum utility threshold$\%$ },
				ylabel={$\#$ Candidates log base 2 scale},
				enlarge y limits=true,
				%xtick={0,5,10,15,20,25},
				%ytick={0,2000,4000,6000,8000,10000,12000},
				%legend pos=outer north east,
				ymajorgrids=true,
				,mark size=3pt,
				label style={font=\Large},
                tick label style={font=\Large}
				]
				
				\addplot[
				color=blue,
				mark=square,
				style=thick,
				]
				coordinates {
					(0.5,{log2(90783)})(1,{log2(27157)})(2,{log2(10150)})(4,{log2(3641)})(6,{log2(1511)})
				};
				\addplot[
				color=red,
				mark=triangle,
				style=thick,
				]
				coordinates {
					(0.5,{log2(145172035)})(1,{log2(64734132)})(2,{log2(18670528)})(4,{log2(1339816)})(6,{log2(74819)})
				};    
				\addplot[
				color=cyan,
				mark=diamond,
				style=very thick,
				]
				coordinates {
					(0.5,{log2(51893)})(1,{log2(20229)})(2,{log2(8638)})(4,{log2(3064)})(6,{log2(1375)})
				};    
				\addplot[
				color=black,
				mark=*,
				style=thick,
				]
				coordinates {
					(0.5,{log2(85121)})(1,{log2(27305)})(2,{log2(11505)})(4,{log2(5709)})(6,{log2(3625)})
				};
				\end{axis}
				
				\end{tikzpicture}
			\end{minipage}
			\hspace{0.3cm}
			&
			
			\begin{minipage}{.27\linewidth}
				\begin{tikzpicture}[scale=0.45]
				\begin{axis}[
				title={\huge ChainStore (Sparse)},
				xlabel={Minimum utility threshold$\%$ },
				ylabel={$\#$ UFC log base 2 scale},
				enlarge y limits=true,
				%xtick={0,5,10,15,20,25},
				%ytick={0,2000,4000,6000,8000,10000,12000},
				%legend pos=outer north east,
				ymajorgrids=true,
				,mark size=3pt,
				label style={font=\Large},
                tick label style={font=\Large}
				]
				
				\addplot[
				color=blue,
				mark=square,
				style=thick,
				]
				coordinates {
					(0.5,{log2(93276703)})(1,{log2(80932114)})(2,{log2(69969499)})(4,{log2(55875453)})(6,{log2(45910088)})
				};
				\addplot[
				color=red,
				mark=triangle,
				style=thick,
				]
				coordinates {
					(0.5,{log2(209442672)})(1,{log2(134196734)})(2,{log2(94394372)})(4,{log2(67067830)})(6,{log2(52782291)})
				};    
				\addplot[
				color=cyan,
				mark=diamond,
				style=very thick,
				]
				coordinates {
					(0.5,{log2(886982459)})(1,{log2(742137113)})(2,{log2(572334499)})(4,{log2(338189340)})(6,{log2(205538744)})
				};    
				\addplot[
				color=black,
				mark=*,
				style=thick,
				]
				coordinates {
					(0.5,{log2(1044587602)})(1,{log2(864787265)})(2,{log2(668144909)})(4,{log2(368702200)})(6,{log2(214565320)})
				};
				\end{axis}
				
				\end{tikzpicture}
			\end{minipage}
			\\
			\begin{minipage}{.27\linewidth}
				\begin{tikzpicture}[scale=0.45]
				\begin{axis}[
				title={\huge Kosarak (Sparse)},
				xlabel={Minimum utility threshold$\%$ },
				ylabel={Time log base 2 scale (sec)},
				enlarge y limits=true,
				%xtick={0,5,10,15,20,25},
				%ytick={0,2000,4000,6000,8000,10000,12000},
				%legend pos=outer north east,
				ymajorgrids=true,
				,mark size=3pt,
				label style={font=\Large},
                tick label style={font=\Large}
				]
				
				\addplot[
				color=blue,
				mark=square,
				style=thick,
				]
				coordinates {
					(14,{log2(5343)})(16,{log2(945)})(18,{log2(771)})(20,{log2(666)})(22,{log2(617)})
				};
				\addplot[
				color=red,
				mark=triangle,
				style=thick,
				]
				coordinates {
					(14,{log2(2401)})(16,{log2(1408)})(18,{log2(975)})(20,{log2(824)})(22,{log2(755)})
				};    
				\addplot[
				color=cyan,
				mark=diamond,
				style=very thick,
				]
				coordinates {
					(14,{log2(12760)})(16,{log2(3553)})(18,{log2(1034)})(20,{log2(870)})(22,{log2(800)})
				};    
				\addplot[
				color=black,
				mark=*,
				style=thick,
				]
				coordinates {
					(14,{log2(14545)})(16,{log2(4018)})(18,{log2(1098)})(20,{log2(921)})(22,{log2(850)})
				};    
				\end{axis}
				
				\end{tikzpicture}
			\end{minipage}
			\hspace{0.3cm}
			&
			
			\begin{minipage}{.27\linewidth}
				\begin{tikzpicture}[scale=0.45]
				\begin{axis}[
				title={\huge Kosarak (Sparse)},
				xlabel={Minimum utility threshold$\%$ },
				ylabel={$\#$ candidates log base 2 scale},
				enlarge y limits=true,
				%xtick={0,5,10,15,20,25},
				%ytick={0,2000,4000,6000,8000,10000,12000},
				%legend pos=outer north east,
				ymajorgrids=true,
				,mark size=3pt,
				label style={font=\Large},
                tick label style={font=\Large}
				]
				
				\addplot[
				color=blue,
				mark=square,
				style=thick,
				]
				coordinates {
					(14,{log2(87008)})(16,{log2(1942)})(18,{log2(158)})(20,{log2(100)})(22,{log2(79)})
				};
				\addplot[
				color=red,
				mark=triangle,
				style=thick,
				]
				coordinates {
					(14,{log2(15206)})(16,{log2(5599)})(18,{log2(1867)})(20,{log2(1321)})(22,{log2(1131)})
				};    
				\addplot[
				color=cyan,
				mark=diamond,
				style=very thick,
				]
				coordinates {
					(14,{log2(597)})(16,{log2(247)})(18,{log2(85)})(20,{log2(60)})(22,{log2(47)})
				};    
				\addplot[
				color=black,
				mark=*,
				style=thick,
				]
				coordinates {
					(14,{log2(3847)})(16,{log2(1869)})(18,{log2(1327)})(20,{log2(1121)})(22,{log2(975)})
				};
				\end{axis}
				
				\end{tikzpicture}
			\end{minipage}
			\hspace{0.3cm}
			&
			
			\begin{minipage}{.27\linewidth}
				\begin{tikzpicture}[scale=0.45]
				\begin{axis}[
				title={\huge Kosarak (Sparse)},
				xlabel={Minimum utility threshold$\%$ },
				ylabel={$\#$ UFC log base 2 scale},
				enlarge y limits=true,
				%xtick={0,5,10,15,20,25},
				%ytick={0,2000,4000,6000,8000,10000,12000},
				%legend pos=outer north east,
				ymajorgrids=true,
				mark size=3pt,
				label style={font=\Large},
                tick label style={font=\Large}
				]
				
				\addplot[
				color=blue,
				mark=square,
				style=thick,
				]
				coordinates {
					(14,{log2(411223356)})(16,{log2(91756496)})(18,{log2(70094850)})(20,{log2(64780576)})(22,{log2(60763956)})
				};
				\addplot[
				color=red,
				mark=triangle,
				style=thick,
				]
				coordinates {
					(14,{log2(310140616)})(16,{log2(168402729)})(18,{log2(100077884)})(20,{log2(86192874)})(22,{log2(80038778)})
				};    
				\addplot[
				color=cyan,
				mark=diamond,
				style=very thick,
				]
				coordinates {
					(14,{log2(826427368)})(16,{log2(314247289)})(18,{log2(116382358)})(20,{log2(89755606)})(22,{log2(82147382)})
				};    
				\addplot[
				color=black,
				mark=*,
				style=thick,
				]
				coordinates {
					(14,{log2(1137140909)})(16,{log2(379781409)})(18,{log2(125325787)})(20,{log2(95070330)})(22,{log2(86964310)})
				};
				\end{axis}
				
				\end{tikzpicture}
			\end{minipage}\\ 
			\begin{minipage}{.27\linewidth}
				\begin{tikzpicture}[scale=0.45]
				\begin{axis}[
				title={\huge Mushroom (Dense)},
				xlabel={Minimum utility threshold$\%$ },
				ylabel={Time log base 2 scale (sec)},
				enlarge y limits=true,
				%xtick={0,5,10,15,20,25},
				%ytick={0,2000,4000,6000,8000,10000,12000},
				%legend pos=outer north east,
				%legend to name=t4,
				%legend style={at={($(0,0)+(1cm,1cm)$)},
				%	legend columns=4,fill=none,draw=black,anchor=center,align=center},
				ymajorgrids=true,
				mark size=3pt,
				label style={font=\Large},
                tick label style={font=\Large}
				]
				
				\addplot[
				color=blue,
				mark=square,
				style=thick,
				]
				coordinates {
					(64,{(log2(1088))})(66,{(log2(1062))})(68,{(log2(623))})(70,{(log2(596))})(72,{(log2(134))})
				};
				%\addlegendentry{UPG+SM}
				\addplot[
				color=red,
				mark=triangle,
				style=thick,
				]
				coordinates {
					(64,{(log2(44))})(66,{(log2(33))})(68,{(log2(21.5))})(70,{(log2(21))})(72,{(log2(18))})
				};    
			%	\addlegendentry{SM-Miner}
				\addplot[
				color=cyan,
				mark=diamond,
				style=very thick,
				]
				coordinates {
					(64,{(log2(57))})(66,{(log2(47))})(68,{(log2(34))})(70,{(log2(33))})(72,{(log2(27))})
				};    
			%	\addlegendentry{EFIMSM}
				\addplot[
				color=black,
				mark=*,
				style=thick,
				]
				coordinates {
					(64,{log2(66)})(66,{log2(60)})(68,{log2(47)})(70,{log2(45)})(72,{log2(40)})
				};    
			%	\addlegendentry{D2HUPSM}
				\end{axis}
				
				\end{tikzpicture}
			\end{minipage}
			\hspace{0.3cm}
			&
			
			\begin{minipage}{.27\linewidth}
				\begin{tikzpicture}[scale=0.45]
				\begin{axis}[
				title={\huge Mushroom (Dense)},
				xlabel={Minimum utility threshold$\%$ },
				ylabel={$\#$ candidates log base 2 scale},
				enlarge y limits=true,
				%xtick={0,5,10,15,20,25},
				%ytick={0,2000,4000,6000,8000,10000,12000},
				%legend pos=outer north east,
				ymajorgrids=true,
				mark size=3pt,
				label style={font=\Large},
                tick label style={font=\Large}
				]
				
				\addplot[
				color=blue,
				mark=square,
				style=thick,
				]
				coordinates {
					(64,{(log2(36524))})(66,{(log2(35846))})(68,{(log2(22696))})(70,{(log2(21724))})(72,{(log2(6556))})
				};
				\addplot[
				color=red,
				mark=triangle,
				style=thick,
				]
				coordinates {
					(64,{(log2(915))})(66,{(log2(774))})(68,{(log2(581))})(70,{(log2(528))})(72,{(log2(445))})
				};    
				\addplot[
				color=cyan,
				mark=diamond,
				style=very thick,
				]
				coordinates {
					(64,{(log2(129))})(66,{(log2(69))})(68,{(log2(45))})(70,{(log2(41))})(72,{(log2(39))})
				};    
				\addplot[
				color=black,
				mark=*,
				style=thick,
				]
				coordinates {
					(64,{log2(695)})(66,{log2(551)})(68,{log2(381)})(70,{log2(338)})(72,{log2(262)})
				};
				
				\end{axis}
				
				\end{tikzpicture}
			\end{minipage}
			\hspace{0.3cm}
			&
			
			\begin{minipage}{.27\linewidth}
				\begin{tikzpicture}[scale=0.45]
				\begin{axis}[
				title={\huge Mushroom (Dense)},
				xlabel={Minimum utility threshold$\%$ },
				ylabel={$\#$ UFC log base 2 scale},
				enlarge y limits=true,
				%xtick={0,5,10,15,20,25},
				%ytick={0,2000,4000,6000,8000,10000,12000},
				%legend pos=outer north east,
				ymajorgrids=true,
				mark size=3pt,
				label style={font=\Large},
                tick label style={font=\Large}
				]
				
				\addplot[
				color=blue,
				mark=square,
				style=thick,
				]
				coordinates {
					(64,{(log2(439612943))})(66,{(log2(433293671))})(68,{(log2(281661954))})(70,{(log2(270944837))})(72,{(log2(76154252))})
				};
				\addplot[
				color=red,
				mark=triangle,
				style=thick,
				]
				coordinates {
					(64,{(log2(20467267))})(66,{(log2(16356086))})(68,{(log2(11849276))})(70,{(log2(10811239))})(72,{(log2(9122794))})
				};    
				\addplot[
				color=cyan,
				mark=diamond,
				style=very thick,
				]
				coordinates {
					(64,{(log2(20859324))})(66,{(log2(16740151))})(68,{(log2(13253905))})(70,{(log2(12476142))})(72,{(log2(11970321))})
				};    
				\addplot[
				color=black,
				mark=*,
				style=thick,
				]
				coordinates {
					(64,{log2(28501275)})(66,{log2(24086726)})(68,{log2(18535253)})(70,{log2(17697158)})(72,{log2(15800471)})
				};
				
				\end{axis}
				
				\end{tikzpicture}
			\end{minipage}
			\\
			\begin{minipage}{.27\linewidth}
				\begin{tikzpicture}[scale=0.45]
				\begin{axis}[
				title={\huge Accidents (Dense)},
				xlabel={Minimum utility threshold$\%$ },
				ylabel={Time log base 2 scale (sec)},
				enlarge y limits=true,
				%xtick={0,5,10,15,20,25},
				%ytick={0,2000,4000,6000,8000,10000,12000},
				%legend pos=outer north east,
				ymajorgrids=true,
				mark size=3pt,
				label style={font=\Large},
                tick label style={font=\Large}
				]
				
				\addplot[
				color=red,
				mark=triangle,
				style=thick,
				]
				coordinates {
					(64,{(log2(16795))})(66,{(log2(13643))})(68,{(log2(10383))})(70,{(log2(8425))})(72,{(log2(7309))})
				};    
				\addplot[
				color=cyan,
				mark=diamond,
				style=very thick,
				]
				coordinates {
					(64,{(log2(14300))})(66,{(log2(10407))})(68,{(log2(9299))})(70,{(log2(8013))})(72,{(log2(6766))})
				};    
				\addplot[
				color=black,
				mark=*,
				style=thick,
				]
				coordinates {
					(64,{log2(18658)})(66,{log2(16009)})(68,{log2(12542)})(70,{log2(10952)})(72,{log2(9020)})
				};
				
				\end{axis}
				
				\end{tikzpicture}
			\end{minipage}
			\hspace{0.3cm}
			&
			
			\begin{minipage}{.27\linewidth}
				\begin{tikzpicture}[scale=0.45]
				\begin{axis}[
				title={\huge Accidents (Dense)},
				xlabel={Minimum utility threshold$\%$ },
				ylabel={$\#$ candidates log base 2 scale},
				enlarge y limits=true,
				%xtick={0,5,10,15,20,25},
				%ytick={0,2000,4000,6000,8000,10000,12000},
				%legend pos=outer north east,
				ymajorgrids=true,
				mark size=3pt,
				label style={font=\Large},
                tick label style={font=\Large}
				]
				
				\addplot[
				color=blue,
				mark=square,
				style=thick,
				]
				coordinates {
					(64,{(log2(138281))})(66,{(log2(111489))})(68,{(log2(90228))})(70,{(log2(73515))})(72,{(log2(59779))})
				};
				\addplot[
				color=red,
				mark=triangle,
				style=thick,
				]
				coordinates {
					(64,{(log2(5848))})(66,{(log2(4568))})(68,{(log2(3767))})(70,{(log2(3021))})(72,{(log2(2445))})
				};    
				\addplot[
				color=cyan,
				mark=diamond,
				style=very thick,
				]
				coordinates {
					(64,{(log2(1015))})(66,{(log2(744))})(68,{(log2(536))})(70,{(log2(397))})(72,{(log2(273))})
				};    
				\addplot[
				color=black,
				mark=*,
				style=thick,
				]
				coordinates {
					(64,{log2(4125)})(66,{log2(3254)})(68,{log2(2644)})(70,{log2(2102)})(72,{log2(1681)})
				};
				
				\end{axis}
				
				\end{tikzpicture}
			\end{minipage}
			\hspace{0.3cm}
			&
			
			\begin{minipage}{.27\linewidth}
				\begin{tikzpicture}[scale=0.45]
				\begin{axis}[
				title={\huge Accidents (Dense)},
				xlabel={Minimum utility threshold$\%$ },
				ylabel={$\#$ UFC log base 2 scale},
				enlarge y limits=true,
				%xtick={0,5,10,15,20,25},
				%ytick={0,2000,4000,6000,8000,10000,12000},
				%legend pos=outer north east,
				ymajorgrids=true,
				mark size=3pt,
				label style={font=\Large},
                tick label style={font=\Large}
				]
				
				\addplot[
				color=red,
				mark=triangle,
				style=thick,
				]
				coordinates {
					(64,{(log2(9175572791))})(66,{(log2(7258629338))})(68,{(log2(6004441884))})(70,{(log2(4881941798))})(72,{(log2(3935664480))})
				};    
				\addplot[
				color=cyan,
				mark=diamond,
				style=very thick,
				]
				coordinates {
					(64,{(log2(7929969344))})(66,{(log2(6292722696))})(68,{(log2(5139381998))})(70,{(log2(4172052247))})(72,{(log2(3379561481))})
				};    
				\addplot[
				color=black,
				mark=*,
				style=thick,
				]
				coordinates {
					(64,{log2(10480176493)})(66,{log2(8306843707)})(68,{log2(6848019619)})(70,{log2(5545724096)})(72,{log2(4486222798)})
				};
				
				\end{axis}
				
				\end{tikzpicture}
			\end{minipage}\\

			%\begin{center}

			%\end{center}
			
		\end{tabular}
		
		\ref{t2}
		\caption{Performance evaluation on real datasets for $ucov$.}
		\label{fig:real_sparse_perf_coverage}
	\end{center}
\end{figure}

\noindent \textbf{Result-3: Are the performance trend of algorithms same as HUIM scenario?} The performance trend of algorithms is different from HUIM scenario. Projection-based algorithms are known to perform an order of magnitude better than other algorithms on dense and sparse datasets for HUIM. D2HUP performs the best on Sparse and EFIM performs the best on dense datasets for HUIM. However, we observe that the tree-based and list-based algorithms perform better compared to projection-based algorithms on sparse datasets. SM-Miner competes with EFIMSM for the best performance on dense datasets. We observe that the total execution time of the algorithms is more correlated with the number of utility function calls compared to the number of candidates generated during the mining process. 

\section{Conclusion}\label{sec:Conclusion and Future Work}
In this paper, we formalize the problem of high-utility itemset mining for a general utility function. We focus on designing upper-bounds for high-utility itemset mining problem with subadditive monotone functions (HUIM-SM). We propose a new inverted-list data structure called SMI-List with a lightweight construction method and an algorithm called SM-Miner to find high-utility itemsets for arbitrary subadditive monotone functions. We also adapt the existing tree-based and projection-based high-utility itemset mining algorithms for SM functions. We compare the performance of several HUIM-SM algorithms on dense and sparse datasets. Our results demonstrate that the computation of utility can play an important role in the runtime performance of algorithms for complex utility functions like $ucov$. Subadditive monotone utility functions like $ucov$ can be designed to find high-influence active groups that can be attractive for applications like viral marketing.

\bibliographystyle{unsrt}
\bibliography{bib_file_backup}
\end{document}